\begin{document}
\hyphenpenalty=100
\newtheorem{proposition}{Proposition}
\newtheorem{lemma}{Lemma}
\title{A Framework of RIS-assisted ICSC User-centric Based Systems: Latency Optimization and Design}
\author{
    Jiahua Wan, Hong Ren,~\IEEEmembership{Member,~IEEE,} Zhiyuan Yu, Zhenkun Zhang, Yang Zhang, 
    
    Cunhua Pan,~\IEEEmembership{Senior Member,~IEEE,} and Jiangzhou Wang,~\IEEEmembership{Fellow,~IEEE}   
\thanks{Part work of this article was accepted by the IEEE International Conference on Acoustics, Speech and Signal Processing Workshops (ICASSPW), Seoul, Korea, April 2024 \cite{77}.}
\thanks{J. Wan, H. Ren, Z. Yu, Z. Zhang, Y. Zhang, and C. Pan are with National Mobile Communications Research Laboratory, Southeast University, Nanjing, China. (email:\{wanjiahua, hren, zyyu, zhenkun\_zhang, y\_zhang, cpan\}@seu.edu.cn). J. Wang is with the School of Engineering, University of Kent, Canterbury CT2 7NT, U.K. (e-mail: j.z.wang@kent.ac.uk).}
} 
\maketitle
\begin{abstract}
    This paper studies a comprehensive framework for reconfigurable intelligent surface (RIS)-assisted integrated communication, sensing, and computation (ICSC) systems with a User-centric focus. The study encompasses two scenarios: the general multi-user equipment (UE) scenario and the simplified single-UE scenario. To satisfy the critical need for time-efficient sensing, we investigate the latency minimization problem, subject to constraints on UEs' transmit power, radar signal-to-interference-plus-noise-ratio (SINR), RIS phase shift, and computation capability. To address the formulated non-convex problem in the multi-UE scenario, we decouple the original problem into two subproblems, where the computational and beamforming settings are optimized alternately. Specifically, for the computational settings, we derive a closed-form solution for the offloading volume and propose a low-complexity algorithm based on the bisection search method to optimize the edge computing resource allocation. Additionally, we employ two equivalent transformations to address the challenge posed by the non-convex sum-of-ratios form in the objective function (OF) of the subproblem related to active and passive beamforming. Several techniques are then combined to address these subproblems. Furthermore, a low-complexity algorithm that offers closed-form solutions is developed for the simplified single UE scenario. Finally, simulation results substantiate the effectiveness of the proposed framework.
\end{abstract}

\begin{IEEEkeywords}
Integrated communication, sensing, and computation (ICSC),  dual function radar and communication (DFRC), reconfigurable
intelligent surface (RIS), mobile edge computing (MEC)
\end{IEEEkeywords}

\section{Introduction}
\IEEEPARstart{W}{ith the advent of} sixth-generation (6G) networks, the explosion in the number of user equipments (UEs) in wireless communication has precipitated a severe scarcity of spectrum resources \cite{69}. Integrated sensing and communication (ISAC) has emerged as a pivotal 6G technology, efficiently fulfilling both communication and sensing functions within the same frequency band, thereby enhancing spectral efficiency \cite{3,4,5}. 

However, the efficient processing of vast distributed ISAC data remains a formidable challenge, necessitating innovative solutions such as the integration of mobile edge computing (MEC) \cite{68} with ISAC. Specifically, MEC facilitates the migration of computation capability to the wireless network edge by deploying edge servers at cellular base stations (BSs), as demonstrated in \cite{62}. This strategic placement enables the execution of high computational tasks in proximity to ISAC application, thereby achieving ultra-low ISAC service latency while minimizing traffic loads. This integration paves the way for integrated communication, sensing, and computation (ICSC), which strategically processes high computational tasks at the network edge, thereby reducing latency and network congestion significantly \cite{29,30,40,41}.

Nonetheless, the efficacy of  ICSC systems can be compromised owing to the potential obstacles in practical scenarios, which may lead to prolonged offloading delays in MEC \cite{24} and diminished received power of sensing signals \cite{16,52}. Fortunately, reconfigurable intelligent surfaces (RISs) can address these challenges by dynamically adjusting the phase of incident electromagnetic waves\cite{72,73,74}, thus enhancing both communication and sensing capabilities in ISAC systems \cite{67,54}. 
\iffalse
The authors of \cite{67} first introduced the RIS into a dual function radar and communication (DFRC) system to improve radar sensing performance, by optimizing the signal-to-noise-ratio (SNR) at the radar receiver. In addition, the authors of \cite{64} investigated the active RIS-assisted ISAC systems to mitigate the four-hop multiplicative fading effect between the transceiver and the target. 
\fi
Furthermore, RIS is able to reduce delays by establishing additional reliable links and improving channel quality \cite{63,71,25}. According to the above-mentioned benefits of the deployment of the RIS, it is considered as well-suited for integration into ICSC systems to enhance their overall performance and reliability. Consequently, researchers have been inspired to investigate the performance of RIS-assisted ICSC systems \cite{31,32}. In particular, \cite{31} utilized the RIS to alleviate interference between the radar and the edge computing server (ECS), proposing an energy-efficient system. Meanwhile, the authors of \cite{32} employed the RIS as a passive information source in an ICSC system and maximized the weighted throughput capacity, which was considered as a metric for performance evaluation.

However, both \cite{31} and \cite{32} considered ICSC systems deployed predominantly at the base station (BS) end, aligning with the BS-centric model where BSs are tasked with probing sensing signals, and UEs are responsible for offloading information to these BSs. This arrangement, as elaborated in \cite{22}, enhances efficient signal interference management and optimizes network resources. However, the BS-centric based ICSC systems may not fully cater to scenarios requiring rapid response and decentralized processing capabilities. In contrast, the User-centric based ICSC systems \cite{20,23,8} are capable of addressing these limitations by enabling UEs not only to engage in sensing tasks but also to process these tasks directly. This approach fosters high hardware sharing, thereby boosting spectrum and energy efficiency.

Owing to the well-suited nature of RIS to ICSC systems and the advantages of User-centric based ICSC systems, it is critical to investigate the RIS-assisted User-centric ICSC systems. This integration not only ensures low latency and high performance of ICSC systems but also contributes to boosting spectrum and energy efficiency. Specifically, RIS enhances the communication link of the ICSC system and facilitates the offloading of the substantial volume of data generated by the radar sensing to the ECS, thus ensuring the low latency attainment of the highly accurate sensing results. However, to the best of our knowledge, research on this integration is still missing.

While we recognize the significant importance, studying the RIS-assisted User-centric based ICSC systems is indeed a highly challenging task. Firstly, compared to the BS-centric\cite{22}, the User-centric approach inherently involves a wider variety of scenarios compared to the BS-centric model, increasing the difficulty due to the diversity and unpredictability of user environments and behaviors. Interference among UEs significantly affects the sensing and communication functionalities in ICSC systems, complicating the overall allocation of computational resources and the optimization of system performance. Additionally, when considering a more general case such as multi-input multi-output (MIMO), the challenges intensify. From a technical perspective, the incorporation of precoding in the form of matrices for MIMO scenarios presents significant challenges in processing the performance matrices in ICSC systems. This complexity is due to the need to optimize multiple spatial streams simultaneously, which requires intricate balancing of signal integrity across various channels and users within the User-centric ICSC framework. Furthermore, within this framework, it is crucial to address the challenge separately for both the multi-UE and single-UE scenarios, a topic that has not yet been explored in the literature. Thus, the methods proposed in prior works \cite{64,76} cannot be directly adopted.

Inspired by the preceding discussion, this paper delves into a framework for RIS-assisted ICSC systems with a User-centric focus, where the RIS is deployed to facilitate the offloading of the computing data to the ECS. For the multi-UE scenario, we propose a block coordinate descent (BCD)-based algorithm that integrates the minimum variance distortionless response (MVDR), successive convex approximation (SCA), majorization-minimization (MM), and weighted minimum mean-square error (WMMSE) to effectively address the challenges identified within the system. Then, for the simplified single-UE scenario, a low-complexity algorithm that yields closed-form solutions is presented.
The main contributions of this paper are summarized as follows:
\begin{itemize}
    \item \textbf{\textit{New User-centric Framework:}} An RIS-assisted ICSC system within a User-centric DFRC framework is investigated, where multiple DFRC-enabled UEs simultaneously detect targets and upload radar-sensing computational tasks to an ECS through their communication functionalities. Specifically, we jointly optimize the precoding and decoding matrices of the uplink signal, the receive radar beamforming vector, the RIS phase shift vector, the offloading volume, and the edge computational resource allocation to minimize the weighted latency. Due to the highly coupling among the variables, a BCD-based algorithm is proposed to effectively decouple the optimization of the computational settings and the active and passive beamforming matrices into two subproblems.
    \item \textbf{\textit{Comprehensive solution for multi-UE scenario:}} For the computational settings, we employ the BCD technique again to decouple the offloading volume and the edge computing resource allocation. When the allocation of edge computing resources is fixed, the optimal computational offloading can be determined under the assumption that the latencies of local and edge computing are equal. Given the computational offloading, the Karush-Kuhn-Tucker (KKT) conditions and the bisection search method are employed to determine the optimal allocation of edge computing resources. Then, with fixed computing design, we deal with the joint optimization problem of active and passive beamforming, the objective function (OF) of which takes the form of a non-convex sum-of-ratios, presenting a significant challenge. To address this issue, we reformulate the OF into a tractable form of a weighted sum-rate by introducing  auxiliary variables and subsequently develop an iterative algorithm to solve it. In each iteration, the auxiliary variables are updated using the modified Newton's method. Following the reformulation of the objective function with the WMMSE algorithm, the MVDR, SCA, and MM techniques are employed to optimize the the receive radar beamforming vector, the precoding matrices of the uplink signal and the RIS phase shift vector, respectively.
    \hyphenation{consi-dered}
    \item \textbf{\textit{Close-form Solution for Single-UE Scenario:}} Furthermore, a specific scenario involving a single UE is examined, where the edge computational resource allocation and interference from multiple UEs are not taken into account. We derive a closed-form solution in each iteration.
    \item \textbf{\textit{Effectiveness of the Framework:}} Simulation results verify the effectiveness of the proposed algorithms and the advantages of employing the RIS-assisted ICSC system. It is demonstrated that the proposed algorithm significantly improves the latency performance of the ISCC system in comparison to the conventional approach that lacks an integrated RIS.
\end{itemize}

\begin{figure}[t]
    \centering
    \includegraphics[width=2.6in]{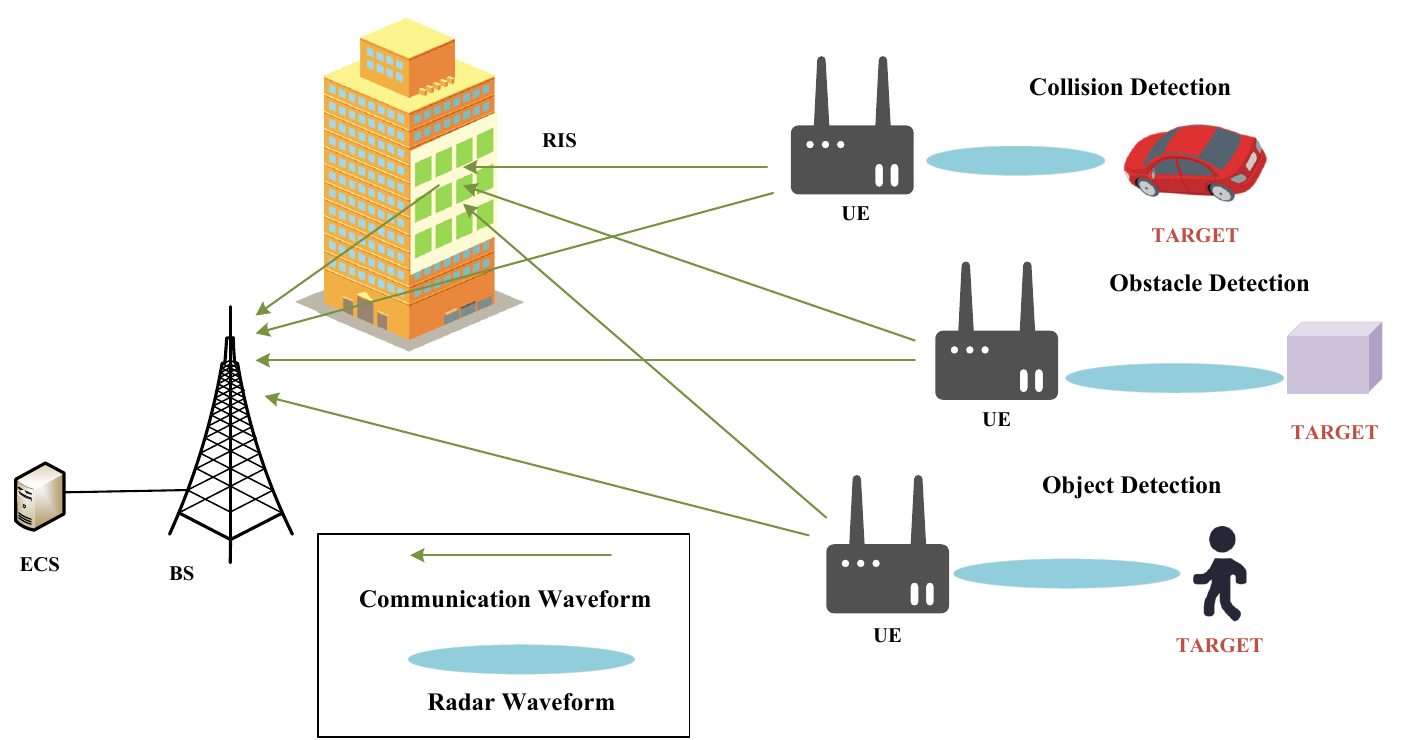}
    \caption{System model.}
    \label{fig1}
\end{figure}

\emph{Notation:} Throughout this paper, scalars, vectors, and matrices are denoted by lowercase italicized letters, lowercase bold letters, and uppercase bold letters, respectively. $\mathbb{C}^{N \times M}$ denotes the space of  $N \times M$ complex matrices. $(\cdot)^{\mathrm{T}}$, $(\cdot)^*$, and $(\cdot)^{\mathrm{H}}$ denote the transpose, complex conjugation, and Hermitian operators, respectively. $(\cdot)^{\star }$ represents the optimal result of a variable. $\mathrm{Tr}(\cdot)$ is the trace of a matrix. $\mathbb{E} (\cdot)$ denotes the statistical expectation operation. $\left\lfloor \cdot \right\rfloor $ and $\left\lceil \cdot \right\rceil $ are the floor and ceiling of a scalar, respectively. $\mathrm{Re} \{\cdot\} $ and $\angle (\cdot)$ represent the real part and the argument of a complex number. $\odot$ denotes the Hadamard product of two matrices.  $ \mathrm{diag} \{\bm{\mathrm{x}}\}$ means the diagonal matrix where the diagonal elements are $\bm{\mathrm{x}}$. $|\cdot|$ and $\left\lVert \cdot \right\rVert$ represent the absolute value of a scalar and the Euclidean norm of a vector, respectively.
%\vspace{-0.2cm}
\section{System Model and Problem Formulation}
\subsection{Communication Model}

\hyphenation{sa-tisfying}
Consider an RIS-assisted ICSC system as illustrated in Fig. \ref{fig1}, which consists of $K$ DFRC-enabled UEs with $N$ antennas, an $L$-element RIS, and an $M$-antenna BS connected to an ECS.\footnote{The radar sensing process generates a substantial volume of computation-intensive data, an ECS is situated at the BS to expedite the processing of the radar data.} Each UE is capable of simultaneously conducting target sensing and communicating with the BS to exchange crucial information and facilitate centralized control. The UE offloads data to the BS via the communication function and enable the rapid acquisition of sensing results with improved accuracy. Since UEs and their corresponding targets are spatially separated from the BS, it is assumed that the RIS is dedicated to providing support for edge computing and does not impact the sensing capabilities of the UEs\cite{18}.
By defining the precoding matrix as $\bm{\mathrm{F}}_{\mathrm{c},k} \in \mathbb{C}^{N\times d}$ and the transmit symbols at time slot $n$ as $\bm{\mathrm{c}}_k = [c_{k,1}[n], \cdots, c_{k,d}[n]]^{\mathrm{T}} \in \mathbb{C}^{d \times 1}$, satisfying  $\mathbb{E} [\bm{\mathrm{c}}_k\bm{\mathrm{c}}_k^{\mathrm{H}}]=\bm{\mathrm{I}}_d$ and $\mathbb{E} [\bm{\mathrm{c}}_k\bm{\mathrm{c}}_l^{\mathrm{H}}]=\bm{\mathrm{0}}_d$,  the transmit signal of UE $k$ is expressed as
\begin{equation}
    {\mathbf{x}}_k = {\mathbf{F}}_{\mathrm{c},k} \bm{\mathrm{c}}_k \in \mathbb{C}^{N \times 1}. \label{equ1}
\end{equation}

We denote the channels from UE $k$ to the BS, from the RIS to the BS, and from UE $k$ to the RIS by $\bm{\mathrm{H}}_{\mathrm{bu},k} \in \mathbb{C}^{M \times N}$, $\bm{\mathrm{H}}_{\mathrm{r}} \in \mathbb{C}^{M \times L}$, and $\bm{\mathrm{H}}_{\mathrm{ru},k} \in \mathbb{C}^{L \times N}$, respectively. For simplicity, the reflection elements of the RIS have unit amplitude and thus the reflecting coefficients matrix of the RIS can be expressed as $\bm{\mathrm{V}} = \mathrm{diag}\{e^{j\theta_{1}},\cdots,e^{j\theta _{L}}\} \in \mathbb{C}^{L \times L}$, where $\theta_l \in (0,2 \pi], \forall l \in \mathcal{L} = \{1,\cdots,L\}$ is the phase of the $l$-th reflecting element of RIS. The overall channel matrix between UE $k$ and the BS is given by
\begin{equation}
    \bm{\mathrm{H}}_k \triangleq \bm{\mathrm{H}}_{\mathrm{bu},k}+\bm{\mathrm{H}}_{\mathrm{r}} \bm{\mathrm{V}} \bm{\mathrm{H}}_{\mathrm{ru},k} 
    \in \mathbb{C}^{M \times N}.
\end{equation}
Then, the signal vector received at the BS is formulated as 
\begin{equation}
    \bm{\mathrm{y}}_{\mathrm{c}} = \sum ^K _{k = 1} \bm{\mathrm{H}}_k \bm{\mathrm{F}}_{\mathrm{c},k}\bm{\mathrm{c}}_k +
     \bm{\mathrm{n}}_{\mathrm{c}} \in \mathbb{C}^{M\times 1},
\end{equation}
where $\bm{\mathrm{n}}_{\mathrm{c}} = [n_{\mathrm{c},1},\cdots,n_{\mathrm{c},M}]^{\mathrm{T}} \in \mathbb{C}^{M \times 1} $, which satisfies $\bm{\mathrm{n}}_{\mathrm{c}}\sim \mathcal{C} \mathcal{N} (\bm{\mathrm{0}},\sigma_{\mathrm{c}}^2\bm{\mathrm{I}}_M)$, denotes the additive white Gaussian noise (AWGN) at the BS. 
After decoding the received signal with a linear matrix $\bm{\mathrm{W}}_{\mathrm{c},k}$, we can obtain the recovered signal of UE $k$ as
\begin{equation}
    \begin{aligned}
        \widehat{\bm{\mathrm{c}}}  _{k} &=\bm{\mathrm{W}}_{\mathrm{c},k}^{\mathrm{H}}  \bm{\mathrm{y}}_{\mathrm{c}} \\
    &= \bm{\mathrm{W}}_{\mathrm{c},k}^{\mathrm{H}} (\sum ^K _{k = 1} \bm{\mathrm{H}}_k \bm{\mathrm{F}}_{\mathrm{c},k}\bm{\mathrm{c}}_k + {\mathbf{n}}_{\mathrm{c}} ) \in \mathbb{C}^{d \times 1}.
    \end{aligned}
\end{equation}
The covariance of the transmit signal from UE $k$ is given by
\begin{equation}
    \bm{\mathrm{R}}_{\bm{\mathrm{X}},k}\triangleq \mathbb{E} [\bm{\mathrm{x}}_k\bm{\mathrm{x}}_k^{\mathrm{H}}]=\bm{\mathrm{F}}_{\mathrm{c},k}\bm{\mathrm{F}}_{\mathrm{c},k}^{\mathrm{H}}.
\end{equation} 
By denoting the transmit power of UE $k$ by $P_{\mathrm{t},k}$, the power constraint for UE $k$ is expressed as
\begin{equation}
    \mathrm{Tr}(\bm{\mathrm{R}}_{\bm{\mathrm{X}},k})= \mathrm{Tr}(\bm{\mathrm{F}}_{\mathrm{c},k}\bm{\mathrm{F}}_{\mathrm{c},k}^{\mathrm{H}})  \leqslant P_{\mathrm{t},k}, \forall k \in \mathcal{K} = \{1,\cdots,K\}.
\end{equation}
Furthermore, the achievable off-loading rate (bit/s/Hz) of UE $k$ can be formulated as
    \begin{align}
&R_k(\bm{\mathrm{F}}_{\mathrm{c}},\bm{\mathrm{W}}_{\mathrm{c},k},\bm{\mathrm{\theta }}) \label{equ7} \\
        &= B_\omega \log_2 |\bm{\mathrm{I}}_d+\bm{\mathrm{W}}_{\mathrm{c},k}^{\mathrm{H}} \bm{\mathrm{H}}_k\bm{\mathrm{F}}_{\mathrm{c},k}\bm{\mathrm{F}}_{\mathrm{c},k}^{\mathrm{H}}\bm{\mathrm{H}}_k^{\mathrm{H}} \bm{\mathrm{W}}_{\mathrm{c},k} (\bm{\mathrm{W}}_{\mathrm{c},k}^{\mathrm{H}} \bm{\mathrm{J}}_k \bm{\mathrm{W}}_{\mathrm{c},k})^{-1}|, \nonumber
    \end{align}
where $B_\omega$ denotes the bandwidth, $\bm{\mathrm{\theta }}$ is defined as $\bm{\mathrm{\theta }} =[\theta _{1},\cdots,\theta _{L}]$, and $\bm{\mathrm{J}}_k$ represents the interference plus noise matrix of UE $k$ that is given by
\begin{equation}
\begin{aligned}
    \nonumber
    \bm{\mathrm{J}}_k =  \sum ^K _{i = 1,i\neq k}  \bm{\mathrm{H}}_i\bm{\mathrm{F}}_{\mathrm{c},i}\bm{\mathrm{F}}_{\mathrm{c},i}^{\mathrm{H}}\bm{\mathrm{H}}_i^{\mathrm{H}} +\sigma_{\mathrm{c}} ^2 \bm{\mathrm{I}}_M \in \mathbb{C}^{M \times M}.
\end{aligned}
\end{equation}
\vspace{-0.4cm}
\subsection{Radar Sensing Model}
Given the transmit signal $\bm{\mathrm{x}}_k $ in   (\ref{equ1}), the received echo of UE $k$ is expressed as\footnote{Note that the radar clutter can be effectively mitigated or eliminated through radar signal processing techniques, we disregard the effects of clutter as \cite{8}. Thus, the radar and computational offloading performance achieved by the proposed algorithms can be regarded as theoretical upper bounds.}
\begin{equation}
    \begin{aligned}
        \bm{\mathrm{y}}_{\mathrm{s},k} =\alpha _k \bm{\mathrm{a}}_{\mathrm{R},k}(\theta _k)\bm{\mathrm{a}}_{\mathrm{T},k}^{\mathrm{H}}(\theta _k) \bm{\mathrm{x}}_k +\sum ^K _{i = 1,i\neq k}  \bm{\mathrm{H}}_{k,i}\bm{\mathrm{x}}_i +  \bm{\mathrm{z}}_k \in \mathbb{C}^{N \times 1},
    \end{aligned}
\end{equation}
%y_{\mathrm{s},k} =\bm{\mathrm{w}}_{\mathrm{s},k}^{\mathrm{H}}\bm{\mathrm{G}}_k \bm{\mathrm{x}}_k +\sum ^K _{i = 1,i\neq k} \bm{\mathrm{w}}_{\mathrm{s},k}^{\mathrm{H}} \bm{\mathrm{H}}_{k,i}\bm{\mathrm{x}}_i + \bm{\mathrm{w}}_{\mathrm{s},k}^{\mathrm{H}} \bm{\mathrm{z}}_k \in \mathbb{C}^{N \times 1}, $\bm{\mathrm{w}}_{\mathrm{s},k} \in \mathbb{C}^{N \times 1}$ is the receive beamforming vector of UE $k$
where $\alpha _k$ represents the amplitude of the radar target located at angle $\theta_k$, $\bm{\mathrm{a}}_{\mathrm{R},k}(\theta _k) \in \mathbb{C}^{N \times 1} $ and $\bm{\mathrm{a}}_{\mathrm{T},k}(\theta _k) \in \mathbb{C}^{N \times 1} $ are the receive and transmit array response vectors for UE $k$, respectively. Matrix $\bm{\mathrm{H}}_{k,i} \in \mathbb{C}^{N \times N}$ denotes the channel matrix between UE $k$ and UE $i$, and $\bm{\mathrm{z}}_k = [z_{k,1},\cdots,z_{k,N}]^{\mathrm{T}} \in \mathbb{C}^{N \times 1}$ is the AWGN with the covariance of $\sigma_{\mathrm{s}}^2$. For simplicity, we define $\bm{\mathrm{G}}_{k} \triangleq  \alpha _k \bm{\mathrm{a}}_{\mathrm{R},k}(\theta _k)\bm{\mathrm{a}}_{\mathrm{T},k}^{\mathrm{H}}(\theta _k) \in \mathbb{C}^{N \times N}$ as the target response matrix. Then, the radar output at UE $k$ is given by
\begin{equation}
    \begin{aligned}
        r_k %&= \bm{\mathrm{w}}_{\mathrm{s},k} \bm{\mathrm{y}}_{\mathrm{s},k} \\
        &= \bm{\mathrm{w}}_{\mathrm{s},k}^{\mathrm{H}}\bm{\mathrm{G}}_k \bm{\mathrm{x}}_k +\sum ^K _{i = 1,i\neq k} \bm{\mathrm{w}}_{\mathrm{s},k}^{\mathrm{H}} \bm{\mathrm{H}}_{k,i}\bm{\mathrm{x}}_i + \bm{\mathrm{w}}_{\mathrm{s},k}^{\mathrm{H}} \bm{\mathrm{z}}_k,
    \end{aligned}
\end{equation}
where $\bm{\mathrm{w}}_{\mathrm{s},k} \in \mathbb{C}^{N \times 1}$ is the receive beamforming vector of UE $k$.
 
The sensing performance can be measured in terms of the radar SINR. After applying receive beamforming, the radar SINR for UE $k$, denoted by  $\gamma_k$, is given by
\begin{equation}
    \label{equs}
    \begin{aligned}
        \gamma _k & = \frac{ \mathbb{E}\left[| \bm{\mathrm{w}}_{s,k}^{\mathrm{H}} \bm{\mathrm{G}}_k \bm{\mathrm{x}}_k | ^2\right]}
    {\mathbb{E} [ |\bm{\mathrm{w}}_{s,k}^{\mathrm{H}}(\sum ^K _{i = 1,i\neq k} \bm{\mathrm{H}}_{k,i}\bm{\mathrm{x}}_i + \bm{\mathrm{z}}_k) | ^2]}  \\
    &= \frac{\mathrm{Tr}(\bm{\mathrm{w}}_{s,k}^{\mathrm{H}} \bm{\mathrm{G}}_k \bm{\mathrm{F}}_{\mathrm{c},k}\bm{\mathrm{F}}_{\mathrm{c},k} ^{\mathrm{H}} \bm{\mathrm{G}}_k^{\mathrm{H}} \bm{\mathrm{w}}_{s,k})}
    {\mathrm{Tr}(\bm{\mathrm{w}}_{s,k}^{\mathrm{H}}\bm{\mathrm{T}}_k\bm{\mathrm{w}}_{s,k})},
    \end{aligned}
\end{equation}
where $\bm{\mathrm{T}}_k \triangleq  \sum ^K _{i = 1,i\neq k}\bm{\mathrm{H}}_{k,i}  \bm{\mathrm{F}}_{\mathrm{c},i} \bm{\mathrm{F}}_{\mathrm{c},i} ^{\mathrm{H}} \bm{\mathrm{H}}_{k,i}^{\mathrm{H}} +\sigma_{\mathrm{s}}^2\bm{\mathrm{I}}_N \in \mathbb{C}^{N \times N}$.

\subsection{Computation Model}
\label{secc}
The partial off-loading strategy is adopted for computational tasks, where a fraction of the data is processed locally, while the remaining portion is offloaded to the BS. 

\paragraph{Local Computation Model}
For UE $k$, we denote the overall number of bits awaiting processing, the number of bits designated for offloading, the number of CPU cycles required for processing each bit, and its CPU frequency by $V_k$, $v_k$, $c_k$, and $f_k^l$, respectively. Then, the latency for local computation of UE $k$ is given by $T_{\mathrm{l},k} = (V_k-v_k)c_k/f_k^l$.

\hyphenation{compu-ting}
\paragraph{Edge Computation Model}
Let $f_k^e$ with $\sum _{k=1}^Kf_k^e\leqslant f_{\mathrm{total}}^e$ denote the computation resources allocated to UE $k$, where $f_{\mathrm{total}}^e$ represents the processing rate of the ECS. It is assumed that the edge computing process for UE $k$ begins when all $v_k$ offloaded bits are received at the BS. The entire process of edge computation for UE $k$ can be divided into three steps: computing offloading, edge computing, and result feedback, with the feedback latency typically being neglected \cite{70}. Consequently, the total latency of edge computation for UE $k$ is expressed as $T_{\mathrm{c},k} = T_{\mathrm{o},k} +T_{\mathrm{e},k} = v_k/R_k(\bm{\mathrm{F}}_{\mathrm{c}},\bm{\mathrm{W}}_{\mathrm{c},k},\bm{\mathrm{\theta }}) + v _k c_k/ f_k^e$.

We consider the simultaneous execution for local and edge computation. In this case, the latency of UE $k$ can be represented by
\begin{equation}
    \begin{aligned}
        &T_k(\bm{\mathrm{F}}_{\mathrm{c}},\bm{\mathrm{W}}_{\mathrm{c},k},\bm{\mathrm{\theta }},v _k,f_k^e) = \max \{T_{\mathrm{l},k} ,T_{\mathrm{c},k}\}\\
        &= \max\{\frac{(V _k-v _k)c_k}{f_k^l},\frac{v _k}{R_k(\bm{\mathrm{F}}_{\mathrm{c}},\bm{\mathrm{W}}_{\mathrm{c},k},\bm{\mathrm{\theta }})}+\frac{v _k c_k}{f_k^e} \}.
    \end{aligned}
\end{equation}
\subsection{Problem Formulation}
\hyphenation{a-llocation}
In this paper, we aim to minimize the total weighted  latency of all UEs by jointly optimizing the communication precoding matrices $\bm{\mathrm{F}}_{\mathrm{c}}=\{\bm{\mathrm{F}}_{\mathrm{c},k},\forall k \in \mathcal{K}\}$, the communication decoding matrices $\bm{\mathrm{W}}_{\mathrm{c}} = \{\bm{\mathrm{W}}_{\mathrm{c},k},\forall k \in \mathcal{K}\}$, the receive radar beamforming vectors $\bm{\mathrm{w}}_{\mathrm{s}} = \{\bm{\mathrm{w}}_{\mathrm{s},k},\forall k \in \mathcal{K}\}$, the RIS phase shift vector $\bm{\mathrm{\theta }} =[\theta _{1},\cdots,\theta _{L}]$, the offloading volume $\bm{\mathrm{v}} = [v _1,\cdots,v _K]^{\mathrm{T}}$, and the edge computational resource allocation $\bm{\mathrm{f}} = [f_1^e,\cdots,f_K^e]^{\mathrm{T}}$. The optimization problem is formulated as
\begin{subequations}
    \begin{align}
        \mathcal{P} 0: \min_{\bm{\mathrm{F}}_{\mathrm{c}},\bm{\mathrm{W}}_{\mathrm{c}},\bm{\mathrm{w}}_{\mathrm{s}},\bm{\mathrm{\theta }},\bm{\mathrm{v}},\bm{\mathrm{f}}}  
        &\sum _{k=1}^K \xi  _k T_k(\bm{\mathrm{F}}_{\mathrm{c}},\bm{\mathrm{W}}_{\mathrm{c},k},\bm{\mathrm{\theta }},v _k,f_k^e)  \\
        s.t. \quad \quad
        & \mathrm{Tr}(\bm{\mathrm{R}}_{\bm{\mathrm{X}},k}) \leqslant P_{t,k},\forall k \in \mathcal{K}, \label{eqa} \\
        & \gamma _k \geqslant  \eta ,\forall k \in \mathcal{K} ,\label{eqb}\\
        & 0 < \theta _l \leqslant 2\pi ,\forall l \in \mathcal{L},\label{eqd}\\
        &\sum _{k=1}^Kf_k^e \leqslant f_{\mathrm{total}}^e, \label{eqc} \\
        & v _k \in \{0,1,\cdots,V _k\},\forall k \in \mathcal{K} , \label{eqe} \\
        & f_k^e \geqslant 0, \forall k \in \mathcal{K}, \label{eqf}
    \end{align}
\end{subequations}
where $\xi_k$ represents the weight of UE $k$ and $\eta$ denotes the radar SINR requirement of UEs. %Constraint (\ref{eqa}) indicates the transmit power limit of UE $k$; Constraint (\ref{eqb}) denotes the radar restrictions; Constraint (\ref{eqd}) defines the phase range of RIS elements; 
The constraints (\ref{eqc}) - (\ref{eqf}) are introduced based on the computation model in Sec. \ref{secc}.

Addressing Problem $\mathcal{P} 0$ poses several challenges. Firstly, the OF of Problem $\mathcal{P} 0$ is in a segmented form. Secondly, the variables of Problem $\mathcal{P} 0$ are highly coupled. Finally, the weighted total latency involves calculating the sum of ratios, which can often be challenging to optimize. 
\section{Algorithm Design}
\label{secAD}
In this section,  Problem $\mathcal{P} 0$ is decoupled into multiple subproblems by using the BCD method, for which solutions are subsequently devised.
% We leverage the BCD method to separate the variables through alternating optimizations. Initially, we handle the computational offloading volume and the computational resource allocation matrix, transforming the OF of $\mathcal{P} 0$ into a more manageable form. Subsequently, with $\bm{\mathrm{v}}$ and $\bm{\mathrm{f}}$ held constant, $\mathcal{P} 0$ is reformulated as a weighted sum-rate maximization problem through a series of mathematical transformations, which can be tackled using the classical WMMSE algorithm. Notably, the SCA method and the MM algorithm are employed to determine the optimal solutions for $\bm{\mathrm{F}}_{\mathrm{c}}$ and $\bm{\theta}$, respectively.
\subsection{Joint Optimization of Offloading Volume $\bm{\mathrm{v}}$ and Edge Computing Resource Allocation  $\bm{\mathrm{f}}$}
\label{secv}
Given $\{\bm{\mathrm{F}}_{\mathrm{c}}, \bm{\mathrm{W}}_{\mathrm{c}}, \bm{\mathrm{w}}_{\mathrm{s}}, \bm{\theta}\}$, Problem $\mathcal{P} 0 $ is reformulated as
\begin{equation}
    \begin{aligned}
        \label{equ13}
        \mathcal{P} 1:\min_{\bm{\mathrm{v}},\bm{\mathrm{f}}} \quad& \sum _{k=1}^K \xi  _k T_k(v _k,f_k^e)  \\
        s.t. \quad&%& T_k \leqslant T_{\max,k} \\
        (\ref{eqc}),(\ref{eqe}),(\ref{eqf}) .
    \end{aligned}
\end{equation}
It can be verified that $T_{\mathrm{l},k}$ and $T_{\mathrm{c},k}$ are respectively monotonically decreasing and increasing with respect to (w.r.t.) $v _k$. Therefore, given $\bm{\mathrm{f}}$, the minimum value of $T_k(v _k) = \max \{T_{\mathrm{l},k} ,T_{\mathrm{c},k}\}$ is obtained when $T_{\mathrm{l},k} = T_{\mathrm{c},k}$ is satisfied. Accordingly, the value of $v_k$ that minimizes $T_k(v_k)$ without the integer constraint, denoted by $\hat{v _k}^{\star}$, is given by
\begin{equation}
    \label{equ15}
    \hat{v _k}^{\star}  = \frac{V_k c_k R_k f_k^e}{f_k^e f_k^l + c_k R_k(f_k^e + f_k^l)}, \forall k \in \mathcal{K} .
\end{equation}
Therefore, the optimal value of $v _k$ is obtained according to 
\begin{equation}
    v _k^{\star } = \arg \min _{\hat{v _k}\in \{ \left\lfloor \hat{v _k}^{\star} \right\rfloor, \left\lceil \hat{v _k}^{\star} \right\rceil  \}} 
    T_k(\hat{v _k}), \forall k \in \mathcal{K}.
\end{equation} 

Given $\bm{\mathrm{v}}$, Problem $\mathcal{P} 1$ can be rewritten as follows:
\begin{equation}
    \begin{aligned}
        \mathcal{P} 1\text{-}1: \min_{\bm{\mathrm{f}}} \quad&\sum _{k=1}^K \xi  _k  \frac{V_k c_k^2 R_k + V_k c_k f_k^e}{f_k^e f_k^l + c_k R_k(f_k^e + f_k^l)} \\
        s.t. \quad&(\ref{eqc}),(\ref{eqf}).
        \label{equ16}
    \end{aligned}
\end{equation}
Problem $\mathcal{P} 1\text{-}1$ is convex w.r.t. $\bm{\mathrm{f}}$ and satisfies the Slater's condition\cite{44}. Hence, the optimal solution of Problem $\mathcal{P} 1\text{-}1$ can be derived via the KKT conditions. Specifically, the Lagrangian function of Problem $\mathcal{P} 1\text{-}1$ can be formulated as  
\begin{equation}
    \mathcal{L} (\bm{\mathrm{f}},\mu ) = \sum _{k=1}^K \xi  _k  \frac{V_k c_k^2 R_k + V_k c_k f_k^e}{f_k^e f_k^l + c_k R_k(f_k^e + f_k^l)} 
    + \mu(\sum _{k=1}^K f_k^e - f_{\mathrm{total}}^e),
\end{equation}
where $\mu \geqslant  0$ denotes the Lagrange multiplier. Furthermore, the KKT conditions for Problem $\mathcal{P} 1\text{-}1$ are expressed as 
\begin{align}
& \frac{\partial \mathcal{L}}{\partial f_k^e} = \frac{-\xi V_k c_k^3 R_k ^2} {[ c_k R_k f_k^l +(f_k^l + c_k R_k){f_k^{e }}^{ \star}]^2 } + \mu ^{\star } = 0, \label{equ18} \\ 
& \mu^{\star } (\sum _{k=1}^K {f_k^{e }}^{ \star} - f_{\mathrm{total}}^e) = 0, \label{equ19} \\  
& {f_k^{e }}^{ \star} \geqslant 0. \label{equ20} 
\end{align}
Given $\mu$, we can obtain the value of ${f_k^{e }}^{ \star}$ by solving (\ref{equ18}) as
\begin{equation}
    \label{equ24}
    {f_k^{e }}^{ \star} = \frac{\sqrt{\frac{\xi V_k c_k^3 R_k ^2}{\mu^{\star }}}-c_k R_k f_k^l }{f_k^l + c_k R_k}, \forall k \in \mathcal{K} .
\end{equation}
From (\ref{equ20}), we have $\sqrt{\frac{\xi V_k c_k^3 R_k ^2}{\mu^{\star }}}-c_k R_k f_k^l \geqslant 0 $, and thus $ \mu^{\star } \leqslant  \frac{\xi V_k c_k}{f_k^{l2}}$. Consequently, the value $\mu^{\star }$ is within the range of $ (\mu_{l}, \mu_{u}] = (0,{\min}_k \frac{\xi V_k c_k}{{f_k^{l}}^2}] $. As $ \sum _{k=1}^Kf_k^e $ monotonically decreases w.r.t. $ \mu $, the optimal solution for $f_k^e$ can be obtained by performing the bisection search based on (\ref{equ19}). The algorithm for solving Problem $\mathcal{P} 1$ is outlined in Algorithm \ref{alg:algorithm1}. The complexity of Algorithm \ref{alg:algorithm1} is mainly dependent on the bisection search method. Given the convergence criterion $\epsilon $ and the maximum iteration time for variables $\{\bm{\mathrm{v}},\bm{\mathrm{f}} \}$, denoted by  $t_1^{\max}$, the overall complexity of Algorithm \ref{alg:algorithm1} can be expressed as $\mathcal{O} (t_1^{\max} \log _2 (\frac{\mu_{u}-\mu_{l}}{\epsilon})K)$. 
\begin{algorithm}[t]
    \caption{Joint optimization of $\{\bm{\mathrm{v}},\bm{\mathrm{f}} \}$, given $\{\bm{\mathrm{F}}_{\mathrm{c}},\bm{\mathrm{W}}_{\mathrm{c}},\bm{\mathrm{w}}_{\mathrm{s}},\bm{\mathrm{\theta }} \}$}
    \label{alg:algorithm1}
    \begin{algorithmic}[1]
     %\fontsize{7pt}{8pt}
      \State Initialize iteration number $t_1 = 0$. Set the maximum number of iterations to $t_1^{\max}$. Initialize $\bm{\mathrm{f}}^{(0)}$ satisfying (\ref{eqc}) and (\ref{eqf}).
      \State  \textbf{Repeat} 
      \State \quad Calculate $\bm{\mathrm{v}}^{(t_1+1)}$ using (\ref{equ15}).
      \State \quad Calculate $\mu $ and $\bm{\mathrm{f}}^{(t_1+1)}$ accordingly  using the bisection search method. \label{step4}
    %  \State \quad Calculate  using (\ref{equ24}). \label{step5}
        \State \quad $ t_1 \gets t_1+1 $.
        \State \textbf{Until} $t_1=t_{1}^{\rm{max}}$ \textbf{or}
        \begin{small}
            \begin{align}
                \epsilon _1^{(t_1+1)} = \frac{\left\lvert \mathrm{obj}^{(t_1+1)}(\bm{\mathrm{v}},\bm{\mathrm{f}}) - \mathrm{obj}^{(t_1)}(\bm{\mathrm{v}},\bm{\mathrm{f}}) \right\rvert }{\mathrm{obj}^{(t_1+1)}(\bm{\mathrm{v}},\bm{\mathrm{f}})}< \epsilon. \nonumber
            \end{align}
        \end{small}
    \end{algorithmic}
  \end{algorithm}    
%\vspace{-0.4cm}     
\subsection{Joint Optimization of Passive and Active Beamforming $\{\bm{\mathrm{F}}_{\mathrm{c}},\bm{\mathrm{W}}_{\mathrm{c}},\bm{\mathrm{w}}_{\mathrm{s}},\bm{\theta }\}$}
\subsubsection{Problem Reformulation}
With the optimized offloading volume $\bm{\mathrm{v}}$, minimizing $T_k$ is equivalent to minimizing the computing offloading latency. Therefore, given $\{\bm{\mathrm{v}},\bm{\mathrm{f}}\}$, Problem $\mathcal{P} 0$ is transformed into
\begin{equation}
    \label{equ26}
    \begin{aligned}
        \mathcal{P} 2:\min_{\bm{\mathrm{F}}_{\mathrm{c}},\bm{\mathrm{W}}_{\mathrm{c}},\bm{\mathrm{w}}_{\mathrm{s}},\bm{\mathrm{\theta}}}  & 
        \sum _{k=1}^K \xi  _k \frac{v _k}{R_k(\bm{\mathrm{F}}_{\mathrm{c}},\bm{\mathrm{W}}_{\mathrm{c},k},\bm{\mathrm{\theta}})}  \\
        s.t.\quad \, &(\ref{eqa}),(\ref{eqb}),(\ref{eqd}). 
    \end{aligned}
\end{equation}
The OF of Problem $\mathcal{P} 2$ is in a non-convex sum-of-ratios form, thereby presenting challenges in deriving solutions through the existing methods\cite{56}. To address this issue, we first reformulate Problem $\mathcal{P} 2$ by introducing an auxiliary variable $\bm{\lambda}=[\lambda_1,\cdots,\lambda_K]$ as 
\begin{equation}
    \label{equ27}
    \begin{aligned}
        \mathcal{P} 2\text{-}1:\min_{\bm{\mathrm{F}}_{\mathrm{c}},\bm{\mathrm{W}}_{\mathrm{c}},\bm{\mathrm{w}}_{\mathrm{s}},\bm{\mathrm{\theta}},\bm{\mathrm{\lambda }}}& \,  \sum _{k=1}^K \lambda _k \\
        s.t. \quad \ \
        &\frac{ \xi _k v _k}{R_k(\bm{\mathrm{F}}_{\mathrm{c}},\bm{\mathrm{W}}_{\mathrm{c},k},\bm{\mathrm{\theta}})} \leqslant \lambda _k,\forall k \in \mathcal{K},  \\
        &(\ref{eqa}),(\ref{eqb}),(\ref{eqd}). 
    \end{aligned}
\end{equation}
Then, we introduce the following proposition for further derivation. 
\begin{proposition}
\label{pro1}
    If $\{\bm{\mathrm{F}}_{\mathrm{c}}^{\star},\bm{\mathrm{W}}_{\mathrm{c}}^{\star},\bm{\mathrm{w}}_{\mathrm{s}}^{\star },\bm{\theta }^{\star}, \bm{\lambda}^{\star}\}$ is the solution of Problem $\mathcal{P} 2\text{-}1$, then $\bm{\delta}^ {\star} = [\delta _1,\cdots,\delta _K]$ exists such that $\{\bm{\mathrm{F}}_{\mathrm{c}}^{\star},\bm{\mathrm{W}}_{\mathrm{c}}^{\star},\bm{\mathrm{w}}_{\mathrm{s}}^{\star },\bm{\theta }^{\star}\}$ satisfies the KKT conditions of the following problem when $\bm{\lambda} = \bm{\lambda}^{\star}$ and $\bm{\delta} = \bm{\delta}^{\star}$
    \begin{equation}
        \label{equ28}
        \begin{aligned}
            \mathcal{P} 2\text{-}2:\min_{\bm{\mathrm{F}}_{\mathrm{c}},\bm{\mathrm{W}}_{\mathrm{c}},\bm{\mathrm{w}}_{\mathrm{s}},\bm{\mathrm{\theta}}}& \, \sum _{k=1}^K \delta _k[\xi _k v _k - \lambda _k R_k(\bm{\mathrm{F}}_{\mathrm{c}},\bm{\mathrm{W}}_{\mathrm{c},k},\bm{\mathrm{\theta}})]\\
            s.t.\quad \
            &(\ref{eqa}),(\ref{eqb}),(\ref{eqd}).  
        \end{aligned}
    \end{equation}
    Furthermore, $\{\bm{\mathrm{F}}_{\mathrm{c}}^{\star},\bm{\mathrm{W}}_{\mathrm{c}}^{\star},\bm{\theta }^{\star}\}$ satisfies the following equations when $\bm{\lambda} = \bm{\lambda}^{\star}$ and $\bm{\delta} = \bm{\delta}^{\star}$
    \begin{equation}
        \label{equ29}
        \begin{cases}
            \delta _k^{\star} = \frac{1}{R_k(\bm{\mathrm{F}}_{\mathrm{c}}^{\star},\bm{\mathrm{W}}_{\mathrm{c},k}^{\star},\bm{\mathrm{\theta}}^{\star})},\forall k \in \mathcal{K}, \\
            \lambda _k^{\star} = \frac{\xi _k v _k }{R_k(\bm{\mathrm{F}}_{\mathrm{c}}^{\star},\bm{\mathrm{W}}_{\mathrm{c},k}^{\star},\bm{\mathrm{\theta}}^{\star})},\forall k \in \mathcal{K}. \\
        \end{cases}
    \end{equation}
\end{proposition}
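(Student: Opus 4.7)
The plan is to follow the classical sum-of-ratios transformation argument (in the spirit of Jong / Dinkelbach-type fractional programming) by writing down the KKT system of $\mathcal{P}2\text{-}1$ and extracting from it both the multipliers $\bm{\delta}^\star$ and the identities (\ref{equ29}). Concretely, I would first rewrite the ratio constraint of $\mathcal{P}2\text{-}1$ in the polynomial form $\xi_k v_k - \lambda_k R_k(\bm{\mathrm{F}}_{\mathrm{c}},\bm{\mathrm{W}}_{\mathrm{c},k},\bm{\mathrm{\theta}}) \leqslant 0$, which is legitimate since $R_k>0$ at any feasible point, and assign a multiplier $\delta_k\geqslant 0$ to it together with multipliers for (\ref{eqa}), (\ref{eqb}), (\ref{eqd}).

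Next I would form the Lagrangian of $\mathcal{P}2\text{-}1$,
\begin{equation}
\mathcal{L}_1 = \sum_{k=1}^K \lambda_k + \sum_{k=1}^K \delta_k\bigl[\xi_k v_k - \lambda_k R_k\bigr] + \mathcal{L}_{\text{rem}},\nonumber
\end{equation}
where $\mathcal{L}_{\text{rem}}$ gathers the contributions of (\ref{eqa}), (\ref{eqb}), (\ref{eqd}), and read off the KKT conditions. Stationarity in $\lambda_k$ immediately yields $1-\delta_k R_k = 0$, hence $\delta_k^\star = 1/R_k(\bm{\mathrm{F}}_{\mathrm{c}}^\star,\bm{\mathrm{W}}_{\mathrm{c},k}^\star,\bm{\mathrm{\theta}}^\star)$, which is the first line of (\ref{equ29}) and is automatically positive. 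Because $\delta_k^\star>0$, complementary slackness forces $\xi_k v_k - \lambda_k^\star R_k^\star = 0$, giving the second line of (\ref{equ29}) $\lambda_k^\star = \xi_k v_k/R_k^\star$. The stationarity of $\mathcal{L}_1$ in the remaining variables $\{\bm{\mathrm{F}}_{\mathrm{c}},\bm{\mathrm{W}}_{\mathrm{c}},\bm{\mathrm{w}}_{\mathrm{s}},\bm{\mathrm{\theta}}\}$ involves only the term $-\sum_k \delta_k \lambda_k \nabla R_k$ from the sum plus $\nabla \mathcal{L}_{\text{rem}}$, since $\sum_k \lambda_k$ is constant in these variables.

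Then I would write the Lagrangian of $\mathcal{P}2\text{-}2$ with $\bm{\lambda},\bm{\delta}$ frozen at $\bm{\lambda}^\star,\bm{\delta}^\star$,
\begin{equation}
\mathcal{L}_2 = \sum_{k=1}^K \delta_k^\star\bigl[\xi_k v_k - \lambda_k^\star R_k\bigr] + \mathcal{L}_{\text{rem}},\nonumber
\end{equation}
and observe that its gradient with respect to $\{\bm{\mathrm{F}}_{\mathrm{c}},\bm{\mathrm{W}}_{\mathrm{c}},\bm{\mathrm{w}}_{\mathrm{s}},\bm{\mathrm{\theta}}\}$ coincides exactly with that of $\mathcal{L}_1$ (the $\xi_k v_k$ pieces are constants and drop out). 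Hence every stationarity condition, together with primal feasibility, dual feasibility, and complementary slackness for (\ref{eqa}), (\ref{eqb}), (\ref{eqd}), that is satisfied by the optimum of $\mathcal{P}2\text{-}1$ is simultaneously a KKT condition of $\mathcal{P}2\text{-}2$ at $\bm{\lambda}=\bm{\lambda}^\star$, $\bm{\delta}=\bm{\delta}^\star$. This delivers the claim.

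The main obstacle I anticipate is a technical one rather than a structural one: justifying that $\mathcal{P}2\text{-}1$ admits KKT multipliers in the first place, i.e., verifying a constraint qualification such as Slater's condition for the reformulated system, and confirming that at the optimum $R_k^\star>0$ so that dividing by $R_k^\star$ in the stationarity condition for $\lambda_k$ is valid and the derived $\delta_k^\star$ is well-defined and strictly positive. Once these regularity points are settled, the remainder is essentially an algebraic identification of the two KKT systems; I would handle the non-convex constraint (\ref{eqb}) by noting that the argument requires only the necessary KKT conditions, which continue to hold under a standard constraint qualification regardless of convexity.
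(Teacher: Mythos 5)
Your proposal is correct and follows essentially the same route as the paper's Appendix~A proof: dualize the polynomialized ratio constraints $\xi_k v_k - \lambda_k R_k \leqslant 0$ with multipliers $\delta_k$, obtain $\delta_k^{\star} = 1/R_k^{\star}$ from stationarity of the Lagrangian in $\lambda_k$ (using $R_k > 0$) and $\lambda_k^{\star} = \xi_k v_k / R_k^{\star}$ from complementary slackness, then observe that the remaining stationarity, feasibility, and slackness conditions coincide with the KKT system of $\mathcal{P}2\text{-}2$ at $\bm{\lambda}=\bm{\lambda}^{\star}$, $\bm{\delta}=\bm{\delta}^{\star}$ since the two Lagrangians differ only by terms constant in $\{\bm{\mathrm{F}}_{\mathrm{c}},\bm{\mathrm{W}}_{\mathrm{c}},\bm{\mathrm{w}}_{\mathrm{s}},\bm{\theta}\}$. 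Your explicit treatment of the constraint qualification and of the positivity of $R_k^{\star}$ is a regularity point the paper leaves implicit, but it does not alter the argument.
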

\begin{proof}
    Please refer to Appendix \ref{app1}.
\end{proof}
From  Proposition \ref{pro1}, if $(\bm{\mathrm{F}}_{\mathrm{c}}^{\star},\bm{\mathrm{W}}_{\mathrm{c}}^{\star},\bm{\mathrm{w}}_{\mathrm{s}}^{\star}, \bm{\mathrm{\theta}}^{\star})$ is the solution to Problem $\mathcal{P} 2\text{-}2$, and the value of $\bm{\mathrm{\lambda }}$ and $\bm{\mathrm{\delta}}$ are set following (\ref{equ29}), then $(\bm{\mathrm{F}}_{\mathrm{c}}^{\star},\bm{\mathrm{W}}_{\mathrm{c}}^{\star},\bm{\mathrm{w}}_{\mathrm{s}}^{\star},\bm{\mathrm{\theta}}^{\star},\bm{\mathrm{\lambda }}^{\star})$ is the solution to Problem $\mathcal{P} 2\text{-}1$. In the following, we propose an algorithm to solve Problem $\mathcal{P} 2\text{-}3$ by alternately optimizing the sets $\{\bm{\lambda},\bm{\delta }\}$ and $\{\bm{\mathrm{F}}_{\mathrm{c}}, \bm{\mathrm{W}}_{\mathrm{c}},\bm{\mathrm{w}}_{\mathrm{s}},\bm{\theta }\}$. Firstly, given $\{\bm{\mathrm{F}}_{\mathrm{c}}, \bm{\mathrm{W}}_{\mathrm{c}},\bm{\mathrm{w}}_{\mathrm{s}},\bm{\theta }\}$, $\bm{\lambda}$ and $\bm{\delta }$ are updated by using the modified Newton's method\cite{25,45}. Then, we optimize $\bm{\mathrm{F}}_{\mathrm{c}}$, $\bm{\mathrm{W}}_{\mathrm{c}}$, $\bm{\mathrm{w}}_{\mathrm{s}}$, and $\bm{\theta }$ given $\{\bm{\lambda},\bm{\delta }\}$ according to Algorithm \ref{alg:algorithm3}, which is discussed in the subsequent subsections. 
The procedure for solving Problem $\mathcal{P} 2\text{-}2$ is summarized in Algorithm \ref{alg:algorithm2}, where we have
    \begin{align}
        &\chi_k (\delta _k) = \delta _k R_k(\bm{\mathrm{F}}_{\mathrm{c}}^{\star},\bm{\mathrm{W}}_{\mathrm{c},k}^{\star},\bm{\mathrm{\theta}}^{\star})-1,\forall k \in \mathcal{K},\\
        &\kappa _k (\lambda _k) = \lambda _kR_k(\bm{\mathrm{F}}_{\mathrm{c}}^{\star},\bm{\mathrm{W}}_{\mathrm{c},k}^{\star},\bm{\mathrm{\theta}}^{\star})-\xi _k v _k, \forall k \in \mathcal{K}.
    \end{align}
%\vspace{-0.5cm}
    \begin{algorithm}
        \caption{Joint optimization of $\{\bm{\mathrm{F}}_{\mathrm{c}},\bm{\mathrm{W}}_{\mathrm{c}},\bm{\mathrm{w}}_{\mathrm{s}},\bm{\mathrm{\theta }}\}$, given $\{\bm{\mathrm{v}},\bm{\mathrm{f}}\}$.}
        \label{alg:algorithm2}
        \begin{algorithmic}[1]
         %\fontsize{7pt}{8pt}
          \State Initialize iteration number $t_2 = 0$ and step $\zeta \in (0,1)$. Set the maximum number of iterations to $t_2^{\max}$. Calculate $R_k^{(0)}$ using (\ref{equr}), $\bm{\mathrm{\delta}} ^{(0)}$ and $ \bm{\mathrm{\lambda }} ^{(0)}$ using (\ref{equ29}).
          \State  \textbf{Repeat}  
          \State \quad Calculate $\{\bm{\mathrm{F}}_{\mathrm{c}}^{(t_2+1)},\bm{\mathrm{W}}_{\mathrm{c}}^{(t_2+1)},\bm{\mathrm{w}}_{\mathrm{s}}^{(t_2+1)}, \bm{\mathrm{\theta }}^{(t_2+1)}\}$ using Algorithm \ref{alg:algorithm3}.
          \State \quad Update $\delta _k^{(t_2 +1)} $ and $\lambda _k ^{(t_2 +1)}$ by 
          \begin{small}
            \begin{align}
                & \delta _k^{(t_2 +1)} = \delta _k^{(t_2)} - \frac{\zeta ^{i^{(t_2 +1)}} \chi _k(\delta _k^{(t_2)})}{R_k^{(t_2+1)}(\bm{\mathrm{F}}_{\mathrm{c}},\bm{\mathrm{W}}_{\mathrm{c}},\bm{\mathrm{\theta}})}, \\
                & \lambda _k^
                {(t_2 +1)} = \lambda _k^{(t_2)} - \frac{\zeta ^{i^{(t_2 +1)}} \kappa  _k( \lambda  _k^{(t_2)})}{R_k^{(t_2+1)}(\bm{\mathrm{F}}_{\mathrm{c}},\bm{\mathrm{W}}_{\mathrm{c}},\bm{\mathrm{\theta}})},
            \end{align}
          \end{small}
            where $i^{(t_2+1)}$ is the smallest integer satisfying
            \begin{small}
                \begin{align}
                    &\sum_{k=1}^K \left\lvert \chi _k \left( \delta _k^{(t_2)} - \frac{\zeta ^{i} \chi _k(\delta _k^{(t_2)})}{R_k^{(t_2+1)}(\bm{\mathrm{F}}_{\mathrm{c}},\bm{\mathrm{W}}_{\mathrm{c}},\bm{\mathrm{\theta}})} \right) \right\rvert^2 + \nonumber\\
                  &\sum_{k=1}^K \left\lvert \chi _k \left( \lambda _k^{(t_2)} - \frac{\zeta ^{i} \kappa  _k( \lambda  _k^{(t_2)})}{R_k^{(t_2+1)}(\bm{\mathrm{F}}_{\mathrm{c}},\bm{\mathrm{W}}_{\mathrm{c}},\bm{\mathrm{\theta}})} \right) \right\rvert^2 \\
                &\leqslant (1-\epsilon _3 \zeta ^i)^2 \sum_{k=1}^K \left[ \left\lvert \chi_k (\delta _k^{(t_2+1)})  \right\rvert ^2 + \left\lvert \kappa _k (\lambda _k^{(t_2 + 1)})  \right\rvert ^2   \right]. \nonumber 
                \end{align}
            \end{small}
            \State \quad $t_2 \gets t_2 + 1$.
            \State \textbf{Until} $\chi_k (\delta _k^{(t_2)})= 0$ \textit{and} $\kappa _k (\lambda _k^{(t_2)}) = 0$ \textbf{or} $t_2=t_{2}^{\rm{max}}$.
        \end{algorithmic}
    \end{algorithm}
\subsubsection{Joint Active and Passive Beamforming Optimization}
In this case, Problem $\mathcal{P} 2\text{-}2$ can be transformed into 
\begin{equation}
    \label{equq}
    \begin{aligned}
        \mathcal{P} 2\text{-}3:\max_{\bm{\mathrm{F}}_{\mathrm{c}},\bm{\mathrm{W}}_{\mathrm{c}},\bm{\mathrm{w}}_{\mathrm{s}}, \bm{\mathrm{\theta}}} &\sum _{k=1}^K \delta _k\lambda _k R_k(\bm{\mathrm{F}}_{\mathrm{c}},\bm{\mathrm{W}}_{\mathrm{c},k},\bm{\mathrm{\theta}})\\
         s.t.\quad \, & (\ref{eqa}),(\ref{eqb}),(\ref{eqd}).
    \end{aligned}
\end{equation}
\paragraph{OF Reformulation and Decoding Matrix Design}
By introducing an auxiliary matrix $ \bm{\mathrm{D}}_k$, the equivalent formulation of Problem $\mathcal{P} 2\text{-}3$ with constants removed can be derived as \cite{33}
\begin{equation}
    \label{equ30}
    \begin{aligned}       
        \mathcal{P} 2\text{-}4:\min_{\bm{\mathrm{F}}_{\mathrm{c}},\bm{\mathrm{W}}_{\mathrm{c}},\bm{\mathrm{w}}_{\mathrm{s}}, \bm{\mathrm{\theta}},\bm{\mathrm{D}}} &\sum _{k=1}^K \delta _k\lambda _k\mathrm{Tr}(\bm{\mathrm{D}}_k \bm{\mathrm{E}}_k(\bm{\mathrm{F}}_{\mathrm{c}},\bm{\mathrm{W}}_{\mathrm{c},k},\bm{\mathrm{\theta}})) \\
        &-\delta _k\lambda _k \log _2|\bm{\mathrm{D}}_k|\\
        s.t.\quad \ \
        & (\ref{eqa}),(\ref{eqb}),(\ref{eqd}),
    \end{aligned}
\end{equation}
where $ \bm{\mathrm{D}} = \{\bm{\mathrm{D}}_k, \forall k \in \mathcal{K} \}$ and $\bm{\mathrm{E}}_k(\bm{\mathrm{F}}_{\mathrm{c}},\bm{\mathrm{W}}_{{\mathrm{c}},k},\bm{\theta })\in \mathbb{C}^{d \times d}$ is the mean-square error (MSE) matrix of UE $k$ given by 
\begin{equation}
    \label{equ36}
    \begin{aligned}
        \nonumber
    &\bm{\mathrm{E}}_k(\bm{\mathrm{F}}_{\mathrm{c}},\bm{\mathrm{W}}_{\mathrm{c},k},\bm{\mathrm{\theta}})
    = \mathbb{E} [( \widehat{\bm{\mathrm{c}}}  _{k}- \bm{\mathrm{c}}_k)(\widehat{\bm{\mathrm{c}}}  _{k} -\bm{\mathrm{c}}_k )^{\mathrm{H}}]\\
    &= (\bm{\mathrm{W}}_{\mathrm{c},k}^{\mathrm{H}} \bm{\mathrm{H}}_k \bm{\mathrm{F}}_{\mathrm{c},k}  -\bm{\mathrm{I}})
    (\bm{\mathrm{W}}_{\mathrm{c},k}^{\mathrm{H}} \bm{\mathrm{H}}_k\bm{\mathrm{F}}_{\mathrm{c},k} -\bm{\mathrm{I}})^{\mathrm{H}} \\
    &+ \sum ^K _{ i = 1,i\neq k}\bm{\mathrm{W}}_{\mathrm{c},k}^{\mathrm{H}} \bm{\mathrm{H}}_i \bm{\mathrm{F}}_{\mathrm{c},i} \bm{\mathrm{F}}_{\mathrm{c},i} ^{\mathrm{H}}\bm{\mathrm{H}}_i^{\mathrm{H}} \bm{\mathrm{W}}_{\mathrm{c},k}
    + \sigma _c^2 \bm{\mathrm{W}}_{\mathrm{c},k}^{\mathrm{H}} \bm{\mathrm{W}}_{\mathrm{c},k}. 
    \end{aligned}
\end{equation}

In the following, we design an algorithm for solving Problem $\mathcal{P} 2\text{-}4$ by alternately optimizing the beamforming variables $\{\mathbf{F}_{\mathrm{c}}, \mathbf{W}_{\mathrm{c}},\mathbf{w}_{\mathrm{s}},\bm{\theta}\}$ under the BCD framework. Given $\{\bm{\mathrm{F}}_{\mathrm{c}},\bm{\mathrm{w}}_{\mathrm{s}},\bm{\theta}\}$, Problem $\mathcal{P} 2\text{-}4$ is mariginal convex w.r.t. $\{\bm{\mathrm{W}}_{\mathrm{c}},\bm{\mathrm{D}}\}$. Hence, by setting the first-order derivative of the OF of Problem $\mathcal{P} 2\text{-}4$ w.r.t. $\bm{\mathrm{W}}_{{\mathrm{c}},k}$ and $\bm{\mathrm{D}}_k$ to zero, respectively, the optimal solutions are obtained as
\begin{align}
    &\bm{\mathrm{W}}_{{\mathrm{c}},k}^{\star} = (\bm{\mathrm{J}}_k + \bm{\mathrm{H}}_k \bm{\mathrm{F}}_{{\mathrm{c}},k}\bm{\mathrm{F}}_{{\mathrm{c}},k}^{\mathrm{H}}\bm{\mathrm{H}}_k^{\mathrm{H}})^{-1}\bm{\mathrm{H}}_k \bm{\mathrm{F}}_{{\mathrm{c}},k}, \label{equw}\\
    &\bm{\mathrm{D}}_k^{\star} = (\bm{\mathrm{E}}_k^{{\mathrm{(MMSE)}}})^{-1}, \label{equw2}
\end{align}
where 
\begin{equation}
    \bm{\mathrm{E}}_k^{\mathrm{(MMSE)}}= \bm{\mathrm{I}}_d - \bm{\mathrm{F}}_{{\mathrm{c}},k}^{\mathrm{H}} \bm{\mathrm{H}}_k^{\mathrm{H}} (\bm{\mathrm{J}}_k + \bm{\mathrm{H}}_k \bm{\mathrm{F}}_{{\mathrm{c}},k}\bm{\mathrm{F}}_{{\mathrm{c}},k}^{\mathrm{H}}\bm{\mathrm{H}}_k^{\mathrm{H}})^{-1}\bm{\mathrm{H}}_k \bm{\mathrm{F}}_{{\mathrm{c}},k} \nonumber
\end{equation}
is the minimum mean square error (MMSE) of UE $k$.

Meanwhile, $R_k$ is given by
\begin{equation}
    \label{equr}
    R_k = B_{\omega } \log_2 |(\bm{\mathrm{E}}_k^{(MMSE)})^{-1}|.
\end{equation}

\paragraph{Receive and Transmit Beamforming Design for the UE}We first consider the optimization of the receive beamforming vector $\bm{\mathrm{w}}_{{\mathrm{s}},k}$ of UE $k$ with fixed $\bm{\mathrm{F}}_{\mathrm{c}}$. Note that $\bm{\mathrm{w}}_{{\mathrm{s}},k}$ is only related to the radar SINR constraint (\ref{eqb}). The optimal $\bm{\mathrm{w}}_{{\mathrm{s}},k}$ that maximizes the sensing SINR is given by 
\begin{equation}
    \label{equ12}
        \begin{aligned}
            \bm{\mathrm{w}}_{\mathrm{s},k}^{\star} 
            &= \arg \max _{\bm{\mathrm{w}}_{\mathrm{s},k}}\frac{\bm{\mathrm{w}}_{\mathrm{s},k}^{\mathrm{H}} \bm{\mathrm{G}}_k \bm{\mathrm{F}}_{\mathrm{c},k} \bm{\mathrm{F}}_{\mathrm{c},k} ^{\mathrm{H}}\bm{\mathrm{G}}_k^{\mathrm{H}} \bm{\mathrm{w}}_{\mathrm{s},k}}{\bm{\mathrm{w}}_{\mathrm{s},k}^{\mathrm{H}} \bm{\mathrm{T}}_k\bm{\mathrm{w}}_{\mathrm{s},k}},
        \end{aligned}
\end{equation}
which equals the eigenvector corresponding to the largest generalized eigenvalue of the matrix $[\bm{\mathrm{T}}_k^{-1} \bm{\mathrm{G}}_k \bm{\mathrm{F}}_{{\mathrm{c}},k} \bm{\mathrm{F}}_{{\mathrm{c}},k} ^{\mathrm{H}}\bm{\mathrm{G}}_k^{\mathrm{H}}]$ according to the MVDR method \cite{36}.

In the following, we focus on optimizing the transmit beamforming matrices $\bm{\mathrm{F}}_{\mathrm{c}}$ with the fixed $\bm{\mathrm{w}}_{{\mathrm{s}},k}$.  First, the non-convex fractional inequalities (\ref{eqb}) can be rewritten as
\begin{small}
    \begin{align}
        \label{equ39}
        \eta \mathrm{Tr}(\bm{\mathrm{T}}_k\bm{\mathrm{w}}_{{\mathrm{s}},k}\bm{\mathrm{w}}_{{\mathrm{s}},k}^{\mathrm{H}}) - \mathrm{Tr}(\bm{\mathrm{G}}_k \bm{\mathrm{F}}_{{\mathrm{c}},k} \bm{\mathrm{F}}_{{\mathrm{c}},k} ^{\mathrm{H}}\bm{\mathrm{G}}_k^{\mathrm{H}} \bm{\mathrm{w}}_{{\mathrm{s}},k}\bm{\mathrm{w}}_{{\mathrm{s}},k}^{\mathrm{H}})\leqslant 0, \forall k \in \mathcal{K},
    \end{align}
\end{small}%
where the SCA method can be applied to derive a convex upper bound of (\ref{equ39}). 
By defining $f(\bm{\mathrm{F}}_{\mathrm{c}}) \triangleq   -\mathrm{Tr}(\bm{\mathrm{G}}_k \bm{\mathrm{F}}_{\mathrm{c},k} \bm{\mathrm{F}}_{\mathrm{c},k} ^{\mathrm{H}}\bm{\mathrm{G}}_k^{\mathrm{H}} \bm{\mathrm{w}}_{\mathrm{s},k}\bm{\mathrm{w}}_{\mathrm{s},k}^{\mathrm{H}})$, the first-order Taylor approximation of $f(\bm{\mathrm{F}}_{\mathrm{c}})$ at $\bm{\mathrm{F}}_{\mathrm{c}}^{(t)}$ in the $t$-th iteration can be formulated as 
\begin{equation}
    \label{equFun}
        \begin{aligned}
        f(\bm{\mathrm{F}}_{\mathrm{c}},\bm{\mathrm{F}}_{\mathrm{c}}^{(t)}) & = \mathrm{Tr}(\bm{\mathrm{G}}_k^{\mathrm{T}} \bm{\mathrm{w}}_{\mathrm{s},k}^* \bm{\mathrm{w}}_{\mathrm{s},k}^{\mathrm{T}} \bm{\mathrm{G}}_k^* \bm{\mathrm{F}}_{\mathrm{c},k}^{(t),*} \bm{\mathrm{F}}_{\mathrm{c},k}^{(t),\mathrm{T}} ) \\
        &  - 2\mathrm{Re} \{ \mathrm{Tr}(\bm{\mathrm{G}}_k^{\mathrm{H}}  \bm{\mathrm{w}}_{\mathrm{s},k}\bm{\mathrm{w}}_{\mathrm{s},k}^{\mathrm{H}} \bm{\mathrm{G}}_k\bm{\mathrm{F}}_{\mathrm{c},k} \bm{\mathrm{F}}_{\mathrm{c},k} ^{(t),\mathrm{H}}) \}.
        \end{aligned}
\end{equation}
Then, with $\{\bm{\mathrm{W}}_{\mathrm{c}},\bm{\mathrm{w}}_{\mathrm{s}}, \bm{\mathrm{\theta}},\bm{\mathrm{D}}\}$ fixed, Problem $\mathcal{P} 2\text{-}4$ can be approximated as
\begin{small}
\label{equx}
    \begin{align}
        \mathcal{P} 2\text{-}5:\min_{\bm{\mathrm{F}}_{\mathrm{c}}} \quad &\sum _{k=1}^K \sum _{i=1}^K \delta _k \lambda_k  \mathrm{Tr}(\bm{\mathrm{D}}_k \bm{\mathrm{W}}_{\mathrm{c},k}^{\mathrm{H}} \bm{\mathrm{H}}_i \bm{\mathrm{F}}_{\mathrm{c},i}\bm{\mathrm{F}}_{\mathrm{c},i}^{\mathrm{H}} \bm{\mathrm{H}}_i^{\mathrm{H}} \bm{\mathrm{W}}_{\mathrm{c},k} )  - \nonumber \\ 
        &\sum _{k=1}^K \delta _k \lambda_k \mathrm{Tr}(\bm{\mathrm{D}}_k( \bm{\mathrm{F}}_{\mathrm{c},k}^{\mathrm{H}} \bm{\mathrm{H}}_k^{\mathrm{H}} \bm{\mathrm{W}}_{\mathrm{c},k} + \bm{\mathrm{W}}_{\mathrm{c},k}^{\mathrm{H}} \bm{\mathrm{H}}_k \bm{\mathrm{F}}_{\mathrm{c},k} )) \nonumber \\
        s.t.\quad 
        & \mathrm{Tr}(\bm{\mathrm{R}}_{\bm{\mathrm{X}},k}) \leqslant P_{t,k},\forall k \in \mathcal{K},\\
        & \eta \mathrm{Tr}(\bm{\mathrm{T}}_k\bm{\mathrm{w}}_{\mathrm{s},k}\bm{\mathrm{w}}_{\mathrm{s},k}^{\mathrm{H}}) + f(\bm{\mathrm{F}}_{\mathrm{c}},\bm{\mathrm{F}}_{\mathrm{c}}^{(t)}) \leqslant 0, \forall k \in \mathcal{K}. \nonumber 
    \end{align}
\end{small}%
It can be verified that Problem $\mathcal{P} 2\text{-}5$ is a convex problem, which can be efficiently solved via CVX\cite{cvx}.

\paragraph{RIS Phase Shift Design}
Finally, we focus on the optimization of the phase shift of the RIS $\bm{\theta}$. The corresponding subproblem can be formulated as
\begin{small}
    \label{equ40}
    \begin{align}
        \mathcal{P} 2\text{-}6:\min_{\bm{\mathrm{\theta}}}\quad & \sum _{k=1}^K \sum _{i=1}^K \delta _k \lambda_k  \mathrm{Tr}(\bm{\mathrm{D}}_k \bm{\mathrm{W}}_{\mathrm{c},k}^{\mathrm{H}} \bm{\mathrm{H}}_i \bm{\mathrm{F}}_{\mathrm{c},i}\bm{\mathrm{F}}_{\mathrm{c},i}^{\mathrm{H}} \bm{\mathrm{H}}_i^{\mathrm{H}} \bm{\mathrm{W}}_{\mathrm{c},k} )  - \nonumber \\ 
        &\sum _{k=1}^K \delta _k \lambda_k \mathrm{Tr}(\bm{\mathrm{D}}_k( \bm{\mathrm{F}}_{\mathrm{c},k}^{\mathrm{H}} \bm{\mathrm{H}}_k^{\mathrm{H}} \bm{\mathrm{W}}_{\mathrm{c},k} + \bm{\mathrm{W}}_{\mathrm{c},k}^{\mathrm{H}} \bm{\mathrm{H}}_k \bm{\mathrm{F}}_{\mathrm{c},k} )) \nonumber \\ 
        s.t. \quad & 0 < \theta _l \leqslant 2\pi , \forall l \in \mathcal{L}.
    \end{align}
\end{small}
By defining $\bm{\mathrm{H}}_k \triangleq \bm{\mathrm{H}}_{\mathrm{bu},k}+\bm{\mathrm{H}}_{\mathrm{r}} \bm{\mathrm{V}} \bm{\mathrm{H}}_{\mathrm{ru},k}$, we have 
\begin{equation}
    \label{equ44}
    \begin{aligned}
        &\bm{\mathrm{D}}_k \bm{\mathrm{W}}_{\mathrm{c},k}^{\mathrm{H}} \bm{\mathrm{H}}_i \bm{\mathrm{F}}_{\mathrm{c},i}\bm{\mathrm{F}}_{\mathrm{c},i}^{\mathrm{H}} \bm{\mathrm{H}}_i^{\mathrm{H}} \bm{\mathrm{W}}_{\mathrm{c},k}\\
        &=\bm{\mathrm{D}}_k \bm{\mathrm{W}}_{\mathrm{c},k}^{\mathrm{H}} \bm{\mathrm{H}}_{\mathrm{bu},i} \bm{\mathrm{F}}_{\mathrm{c},i}\bm{\mathrm{F}}_{\mathrm{c},i}^{\mathrm{H}}  \bm{\mathrm{H}}_{\mathrm{ru},i} ^{\mathrm{H}} \bm{\mathrm{V}}^{\mathrm{H}} \bm{\mathrm{H}}_{\mathrm{r}}^{\mathrm{H}} \bm{\mathrm{W}}_{\mathrm{c},k} \\
        &+ \bm{\mathrm{D}}_k \bm{\mathrm{W}}_{\mathrm{c},k}^{\mathrm{H}} \bm{\mathrm{H}}_{\mathrm{r}} \bm{\mathrm{V}} \bm{\mathrm{H}}_{\mathrm{ru},i} \bm{\mathrm{F}}_{\mathrm{c},i}\bm{\mathrm{F}}_{\mathrm{c},i}^{\mathrm{H}}\bm{\mathrm{H}}_{\mathrm{bu},i}^{\mathrm{H}}\bm{\mathrm{W}}_{\mathrm{c},k} \\
        &+\bm{\mathrm{D}}_k \bm{\mathrm{W}}_{\mathrm{c},k}^{\mathrm{H}} \bm{\mathrm{H}}_{\mathrm{r}} \bm{\mathrm{V}} \bm{\mathrm{H}}_{\mathrm{ru},i} \bm{\mathrm{F}}_{\mathrm{c},i}\bm{\mathrm{F}}_{\mathrm{c},i}^{\mathrm{H}} \bm{\mathrm{H}}_{\mathrm{ru},i} ^{\mathrm{H}} \bm{\mathrm{V}}^{\mathrm{H}}\bm{\mathrm{H}}_{\mathrm{r}}^{\mathrm{H}} \bm{\mathrm{W}}_{\mathrm{c},k} \\
        &+ \bm{\mathrm{D}}_k \bm{\mathrm{W}}_{\mathrm{c},k}^{\mathrm{H}} \bm{\mathrm{H}}_{\mathrm{bu},i} \bm{\mathrm{F}}_{\mathrm{c},i}\bm{\mathrm{F}}_{\mathrm{c},i}^{\mathrm{H}} \bm{\mathrm{H}}_{\mathrm{bu},i}^{\mathrm{H}}\bm{\mathrm{W}}_{\mathrm{c},k}, 
    \end{aligned}
\end{equation}
and 
\begin{equation}
    \label{equ45}
    \begin{aligned}
        &\bm{\mathrm{D}}_k( \bm{\mathrm{F}}_{\mathrm{c},k}^{\mathrm{H}} \bm{\mathrm{H}}_k^{\mathrm{H}} \bm{\mathrm{W}}_{\mathrm{c},k} + \bm{\mathrm{W}}_{\mathrm{c},k}^{\mathrm{H}} \bm{\mathrm{H}}_k \bm{\mathrm{F}}_{\mathrm{c},k}) \\
        & = \bm{\mathrm{D}}_k\bm{\mathrm{F}}_{\mathrm{c},k}^{\mathrm{H}}\bm{\mathrm{H}}_{\mathrm{ru},k} ^{\mathrm{H}} \bm{\mathrm{V}}^{\mathrm{H}} \bm{\mathrm{H}}_{\mathrm{r}}^{\mathrm{H}} \bm{\mathrm{W}}_{\mathrm{c},k} + \bm{\mathrm{D}}_k\bm{\mathrm{F}}_{\mathrm{c},k}^{\mathrm{H}} \bm{\mathrm{H}}_{\mathrm{bu},k}^{\mathrm{H}} \bm{\mathrm{W}}_{\mathrm{c},k} \\
        & + \bm{\mathrm{D}}_k \bm{\mathrm{W}}_{\mathrm{c},k}^{\mathrm{H}} \bm{\mathrm{H}}_{\mathrm{r}} \bm{\mathrm{V}} \bm{\mathrm{H}}_{\mathrm{ru},k} \bm{\mathrm{F}}_{\mathrm{c},k} +\bm{\mathrm{D}}_k \bm{\mathrm{W}}_{\mathrm{c},k}^{\mathrm{H}} \bm{\mathrm{H}}_{\mathrm{bu},k} \bm{\mathrm{F}}_{\mathrm{c},k}. 
    \end{aligned}
\end{equation}
By defining $\bm{\mathrm{S}}_{k,i} \triangleq \bm{\mathrm{H}}_{\mathrm{r}}^{\mathrm{H}} \bm{\mathrm{W}}_{\mathrm{c},k} \bm{\mathrm{D}}_k \bm{\mathrm{W}}_{\mathrm{c},k}^{\mathrm{H}} \bm{\mathrm{H}}_{\mathrm{bu},i} \bm{\mathrm{F}}_{\mathrm{c},i}\bm{\mathrm{F}}_{\mathrm{c},i}^{\mathrm{H}} \bm{\mathrm{H}}_{\mathrm{ru},i} ^{\mathrm{H}} $, $\bm{\mathrm{B}}_i \triangleq \bm{\mathrm{H}}_{\mathrm{ru},i} \bm{\mathrm{F}}_{\mathrm{c},i}\bm{\mathrm{F}}_{\mathrm{c},i}^{\mathrm{H}} \bm{\mathrm{H}}_{\mathrm{ru},i} ^{\mathrm{H}} $,  $\bm{\mathrm{C}}_k \triangleq \bm{\mathrm{H}}_{\mathrm{r}}^{\mathrm{H}} \bm{\mathrm{W}}_{\mathrm{c},k}\bm{\mathrm{D}}_k \bm{\mathrm{W}}_{\mathrm{c},k}^{\mathrm{H}} \bm{\mathrm{H}}_{\mathrm{r}}$, and $\bm{\mathrm{P}}_{k,i} \triangleq  \bm{\mathrm{H}}_{\mathrm{ru},i} \bm{\mathrm{F}}_{\mathrm{c},k} \bm{\mathrm{D}}_k \bm{\mathrm{W}}_{\mathrm{c},k}^{\mathrm{H}} \bm{\mathrm{H}}_{\mathrm{r}}$, (\ref{equ44}) and (\ref{equ45}) are further transformed to
\begin{align}
    &\mathrm{Tr}(\bm{\mathrm{D}}_k \bm{\mathrm{W}}_{\mathrm{c},k}^{\mathrm{H}} \bm{\mathrm{H}}_i \bm{\mathrm{F}}_{\mathrm{c},i}\bm{\mathrm{F}}_{\mathrm{c},i}^{\mathrm{H}} \bm{\mathrm{H}}_i^{\mathrm{H}} \bm{\mathrm{W}}_{\mathrm{c},k} ) \label{equ46}\\
    &=\mathrm{Tr}(\bm{\mathrm{V}} \bm{\mathrm{B}}_i \bm{\mathrm{V}} ^{\mathrm{H}} \bm{\mathrm{C}}_k)+\mathrm{Tr}(\bm{\mathrm{V}} ^{\mathrm{H}}\bm{\mathrm{S}}_{k,i}) + \mathrm{Tr}( \bm{\mathrm{V}} \bm{\mathrm{S}}_{k,i}^{\mathrm{H}})+\mathrm{const}_1,\nonumber
\end{align}
and
\begin{equation}
    \label{equ47}
\begin{aligned}
     &\mathrm{Tr}(\bm{\mathrm{D}}_k( \bm{\mathrm{F}}_{\mathrm{c},k}^{\mathrm{H}} \bm{\mathrm{H}}_k^{\mathrm{H}} \bm{\mathrm{W}}_{\mathrm{c},k} + \bm{\mathrm{W}}_{\mathrm{c},k}^{\mathrm{H}} \bm{\mathrm{H}}_k \bm{\mathrm{F}}_{\mathrm{c},k})) \\
     &= \mathrm{Tr}(\bm{\mathrm{P}}_{k,i}\bm{\mathrm{V}}) + \mathrm{Tr}( \bm{\mathrm{P}}_{k,i}^{\mathrm{H}} \bm{\mathrm{V}}^{\mathrm{H}})+\mathrm{const}_2,
\end{aligned}
\end{equation}
where $\mathrm{const}_1$ and $\mathrm{const}_2$ are constant independent of $\bm{\mathrm{V}}$. 
Thus, Problem $\mathcal{P} 2\text{-}6$ can be rewritten as 
    \begin{align}
        \label{equ41}
        \mathcal{P} 2\text{-}7:\min_{\bm{\mathrm{\theta}}} \quad&\mathrm{Tr}(\bm{\mathrm{V}} \bm{\mathrm{B}} \bm{\mathrm{V}} ^{\mathrm{H}} \bm{\mathrm{C}}) +
        \mathrm{Tr}(\bm{\mathrm{V}}\bm{\mathrm{U}}) +  \mathrm{Tr}(\bm{\mathrm{V}}^{\mathrm{H}} \bm{\mathrm{U}}^{\mathrm{H}}) \\
        s.t. \quad& 0 < \theta _l \leqslant 2\pi ,\forall l \in \mathcal{L}, \nonumber
    \end{align}
where  $\bm{\mathrm{B}} \triangleq  \sum _{i=1}^K\bm{\mathrm{B}}_i $, $\bm{\mathrm{C}} \triangleq  \sum _{k=1}^K \delta _k \lambda_k\bm{\mathrm{C}}_k$, and $\bm{\mathrm{U}} \triangleq  \sum _{i=1}^K\sum _{k=1}^K\delta _k \lambda_k (\bm{\mathrm{S}}_{k,i}^{\mathrm{H}}-\bm{\mathrm{P}}_{k,i})$.

With the definitions of $ \bm{\mathrm{\phi }} \triangleq  [e^{j\theta _1},\cdots,e^{j\theta _L} ]^{\mathrm{T}}$ and $\bm{\mathrm{u}} = [[\bm{\mathrm{U}}]_{1,1},\cdots,[\bm{\mathrm{U}}]_{L,L}]^{\mathrm{T}}$, we have 
\begin{equation}
    \mathrm{Tr}  (\bm{\mathrm{V}} \bm{\mathrm{B}} \bm{\mathrm{V}} ^{\mathrm{H}} \bm{\mathrm{C}}) = \bm{\mathrm{\phi }} ^{\mathrm{H}} (\bm{\mathrm{C}} \odot  \bm{\mathrm{B}}^{\mathrm{T}})\bm{\mathrm{\phi }}, 
\end{equation}
and 
\begin{equation}
    \mathrm{Tr}(\bm{\mathrm{V}}\bm{\mathrm{U}}) =\bm{\mathrm{\phi }} ^{\mathrm{T}} \bm{\mathrm{u}}, \mathrm{Tr}(\bm{\mathrm{V}}^{\mathrm{H}} \bm{\mathrm{U}}^{\mathrm{H}}) =\bm{\mathrm{u}}^{\mathrm{H}} \bm{\mathrm{\phi }} ^{*}.
\end{equation}
Accordingly, Problem $\mathcal{P} 2\text{-}7$ can be reformulated as
\begin{equation}
    \label{equ43}
    \begin{aligned}
        \mathcal{P} 2\text{-}8:\min_{\bm{\mathrm{\phi}}} \quad & g(\bm{\mathrm{\phi }}) = \bm{\mathrm{\phi }} ^{\mathrm{H}} \bm{\mathrm{\varXi }} \bm{\mathrm{\phi }} + 2\mathrm{Re} \{\bm{\mathrm{\phi }} ^{\mathrm{H}}  \bm{\mathrm{u}}^{*}\} \\
        s.t. \quad & \left\lvert \phi _l\right\rvert = 1,\forall l \in \mathcal{L},
    \end{aligned}
\end{equation}
where $\bm{\mathrm{\varXi }} \triangleq \bm{\mathrm{C}} \odot  \bm{\mathrm{B}}^{\mathrm{T}}$. Problem $\mathcal{P} 2\text{-}8$ is a non-convex problem owing to the unit modulus constraint. In the following, the MM algorithm \cite{55} with two steps is applied to tackle this problem. Specifically, the surrogate function $h(\bm{\mathrm{\phi }}|\bm{\mathrm{\phi }}^{(t)})$ at the $t$-th iteration $\bm{\mathrm{\phi }}^{(t)}$, is introduced in the majorization step to provide an upper bound for $g(\bm{\mathrm{\phi }})$.
Then, in the minimization step, we update $\bm{\mathrm{\phi }}$ by $\bm{\mathrm{\phi }}^{(t+1)} \in \arg \min _{\bm{\mathrm{\phi }}} h(\bm{\mathrm{\phi }}|\bm{\mathrm{\phi }}^{(t)})$. To this end, we introduce the lemma below to construct the surrogate function of $g(\bm{\mathrm{\phi }})$.
\begin{lemma}
    \label{lemma1}
    By denoting the maximum eigenvalue of $\bm{\mathrm{\varXi }}$ by $\lambda _{\max}$, we have
    \begin{equation}
        \begin{aligned}
            &g(\bm{\mathrm{\phi }})\leqslant \bm{\mathrm{\phi }}^{\mathrm{H}} \lambda _{\max} \bm{\mathrm{I}}_L\bm{\mathrm{\phi }} - 
            2\mathrm{Re}\{\bm{\mathrm{\phi }}^{\mathrm{H}}(\lambda _{\max} \bm{\mathrm{I}}_L-\bm{\mathrm{\varXi }} )\bm{\mathrm{\phi }}^{(t)}\}\\
            &+\bm{\mathrm{\phi }}^{(t),{\mathrm{H}}} (\lambda _{\max} \bm{\mathrm{I}}_L-\bm{\mathrm{\varXi }} )\bm{\mathrm{\phi }}^{(t)} 
            +2\mathrm{Re} \{\bm{\mathrm{\phi }} ^{\mathrm{H}}  \bm{\mathrm{u}}^{*}\} \triangleq  h(\bm{\mathrm{\phi }}|\bm{\mathrm{\phi }}^{(t)}).
        \end{aligned}
        \end{equation}
\end{lemma}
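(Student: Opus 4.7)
The plan is to derive the surrogate function via the standard quadratic upper bound used in MM methods for unit-modulus optimization. The starting point is the observation that the matrix $\bm{\mathrm{\varXi}} = \bm{\mathrm{C}} \odot \bm{\mathrm{B}}^{\mathrm{T}}$ is Hermitian: indeed $\bm{\mathrm{C}}_k = \bm{\mathrm{H}}_{\mathrm{r}}^{\mathrm{H}} \bm{\mathrm{W}}_{\mathrm{c},k} \bm{\mathrm{D}}_k \bm{\mathrm{W}}_{\mathrm{c},k}^{\mathrm{H}} \bm{\mathrm{H}}_{\mathrm{r}}$ and $\bm{\mathrm{B}}_i = \bm{\mathrm{H}}_{\mathrm{ru},i} \bm{\mathrm{F}}_{\mathrm{c},i}\bm{\mathrm{F}}_{\mathrm{c},i}^{\mathrm{H}} \bm{\mathrm{H}}_{\mathrm{ru},i}^{\mathrm{H}}$ are Hermitian (in fact PSD, recalling $\bm{\mathrm{D}}_k = (\bm{\mathrm{E}}_k^{\mathrm{(MMSE)}})^{-1}$), so $\bm{\mathrm{C}}$ and $\bm{\mathrm{B}}$ are Hermitian, and by the Schur product theorem $\bm{\mathrm{\varXi}}$ is Hermitian. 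Consequently $\lambda_{\max}$, its largest eigenvalue, is real and $\lambda_{\max}\bm{\mathrm{I}}_L - \bm{\mathrm{\varXi}} \succeq \bm{\mathrm{0}}$.

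The next step is to exploit this positive semi-definiteness. For any feasible $\bm{\mathrm{\phi}}$ and current iterate $\bm{\mathrm{\phi}}^{(t)}$, we have
\begin{equation}
(\bm{\mathrm{\phi}} - \bm{\mathrm{\phi}}^{(t)})^{\mathrm{H}} (\lambda_{\max}\bm{\mathrm{I}}_L - \bm{\mathrm{\varXi}})(\bm{\mathrm{\phi}} - \bm{\mathrm{\phi}}^{(t)}) \geqslant 0.
\end{equation}
Expanding the quadratic form and isolating $\bm{\mathrm{\phi}}^{\mathrm{H}} \bm{\mathrm{\varXi}} \bm{\mathrm{\phi}}$ yields
\begin{equation}
\bm{\mathrm{\phi}}^{\mathrm{H}} \bm{\mathrm{\varXi}} \bm{\mathrm{\phi}} \leqslant \bm{\mathrm{\phi}}^{\mathrm{H}} \lambda_{\max}\bm{\mathrm{I}}_L \bm{\mathrm{\phi}} - 2\mathrm{Re}\{\bm{\mathrm{\phi}}^{\mathrm{H}}(\lambda_{\max}\bm{\mathrm{I}}_L - \bm{\mathrm{\varXi}})\bm{\mathrm{\phi}}^{(t)}\} + \bm{\mathrm{\phi}}^{(t),\mathrm{H}}(\lambda_{\max}\bm{\mathrm{I}}_L - \bm{\mathrm{\varXi}})\bm{\mathrm{\phi}}^{(t)}.
\end{equation}
Adding the linear term $2\mathrm{Re}\{\bm{\mathrm{\phi}}^{\mathrm{H}} \bm{\mathrm{u}}^*\}$ to both sides, which is identical on the left and right, produces exactly the claimed bound $g(\bm{\mathrm{\phi}}) \leqslant h(\bm{\mathrm{\phi}}|\bm{\mathrm{\phi}}^{(t)})$.

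I would also mention the two MM consistency properties that justify using $h$ as a surrogate, even though the lemma only asserts the inequality: tightness at $\bm{\mathrm{\phi}} = \bm{\mathrm{\phi}}^{(t)}$ is immediate because the quadratic residual vanishes, so $h(\bm{\mathrm{\phi}}^{(t)}|\bm{\mathrm{\phi}}^{(t)}) = g(\bm{\mathrm{\phi}}^{(t)})$; equality of gradients at that point similarly follows by direct differentiation. There is no real obstacle here, since the only nontrivial ingredient is the Hermitian/PSD nature of $\lambda_{\max}\bm{\mathrm{I}}_L - \bm{\mathrm{\varXi}}$; the rest is a one-line algebraic rearrangement. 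The main subtlety worth flagging is verifying that $\bm{\mathrm{\varXi}}$ is Hermitian so that $\lambda_{\max}$ is meaningful as a real spectral bound, which, as noted above, follows from the Schur product theorem applied to the Hermitian factors $\bm{\mathrm{C}}$ and $\bm{\mathrm{B}}$.
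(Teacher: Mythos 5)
Your proof is correct and is essentially the same argument as the paper's, which offers no derivation of its own but defers to \cite{55,46}, where the bound is established exactly as you do: expand $(\bm{\mathrm{\phi}}-\bm{\mathrm{\phi}}^{(t)})^{\mathrm{H}}(\lambda_{\max}\bm{\mathrm{I}}_L-\bm{\mathrm{\varXi}})(\bm{\mathrm{\phi}}-\bm{\mathrm{\phi}}^{(t)})\geqslant 0$ for Hermitian $\bm{\mathrm{\varXi}}$ and add the linear term. One minor attribution slip: Hermitianness of $\bm{\mathrm{C}}\odot\bm{\mathrm{B}}^{\mathrm{T}}$ is elementary entrywise (the Schur product theorem concerns positive semidefiniteness), and PSDness of $\bm{\mathrm{\varXi}}$ itself is not needed anyway, since $\lambda_{\max}\bm{\mathrm{I}}_L-\bm{\mathrm{\varXi}}\succeq\bm{0}$ follows from Hermitianness alone.
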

\begin{proof}
    Please refer to \cite{55,46}.
\end{proof}  
Note that $\bm{\mathrm{\phi }}^{\mathrm{H}} \bm{\mathrm{\phi }} = L$, $\bm{\mathrm{\phi }}^{\mathrm{H}} \lambda _{\max} \bm{\mathrm{I}}_L\bm{\mathrm{\phi }} = L\lambda _{\max}$ is a constant. Thus, with constants independent of $\bm{\mathrm{\phi}}$ removed,  Problem $\mathcal{P} 2\text{-}8$ with the surrogate function at the $t+1$-th iteration  can be formulated as  
\begin{equation}
    \label{equ49}
    \begin{aligned}
        \mathcal{P} 2\text{-}9:\max_{\bm{\mathrm{\phi }}}\quad & \mathrm{Re}\{ \bm{\mathrm{\phi }}^{\mathrm{H}}[(\lambda _{\max} \bm{\mathrm{I}}_L-\bm{\mathrm{\varXi }} )\bm{\mathrm{\phi }}^{(t)}-\bm{\mathrm{u}}^{*}]\} \\
        s.t. \quad & \left\lvert \phi  _l\right\rvert = 1 ,\forall l \in \mathcal{L}.
    \end{aligned}
\end{equation}
Thus, the optimal solution for $\bm{\theta}$ t the $(t+1)$-th iteration is given by
\begin{equation}
    \label{equtheta}
    \bm{\theta}^{(t+1)} = \angle  ((\lambda _{\max} \bm{\mathrm{I}}_L-\bm{\mathrm{\varXi }} )\bm{\mathrm{\phi }}^{(t)}-\bm{\mathrm{u}}^{*}).
\end{equation}
According to the preceding analysis, the procedure for solving Problem $\mathcal{P}2\text{-}4$ is summarized in Algorithm \ref{alg:algorithm3}. The complexity of this algorithm is provided as follows. Since the number of variables in Step \ref{stepf} is $KNd$, the complexity of updating $\bm{\mathrm{F}}_{\mathrm{c}}^{(t_3+1)}$ via the SCA method is given by $\mathcal{O} ((KNd)^{3.5})$. In Step \ref{stepw}, the complexities of calculating $\bm{\mathrm{W}}_{\mathrm{c}}^{(t_3+1)}$ and $\bm{\mathrm{D}}^{(t_3+1)}$ are on the order of $\mathcal{O} (KM^3)$ and $\mathcal{O} (Kd^3)$, respectively. In Step \ref{stepws}, the complexity of calculating $\bm{\mathrm{w}}_{\mathrm{s}}^{(t_3+1)}$ is $\mathcal{O} (KN^3)$. Defining the number of iterations as $t_{MM}^{\max}$, the computation complexity of updating $\bm{\mathrm{\theta }}^{(t_3+1)}$ via the MM algorithm in Step \ref{steptheta} contains two parts. Specifically, the complexity of calculating the maximum eigenvalue $\lambda _{\max}$ of $\bm{\mathrm{\varXi }}$ is on the order of $\mathcal{O} (L^3)$. Note that the main complexity for each iteration of the MM algorithm arises from calculating $\bm{\mathrm{\phi }}^{(t)}$, whose complexity is given by $\mathcal{O}(L^2)$. Hence, the total complexity of updating $\bm{\mathrm{\theta }}^{(t_3+1)}$ is on the order of $\mathcal{O} (L^3+t_{MM}^{\max} L^2)$. By aggregating these components, the overall complexity of Algorithm \ref{alg:algorithm3} is given by $\mathcal{O} (t_3^{\max}\cdot\max\{(KNd)^{3.5}, KM^3, Kd^3, KN^3, L^3+t_{MM}^{\max} L^2\})$. Furthermore, the complexity of Algorithm \ref{alg:algorithm2} is dominated by calculating $\{\bm{\mathrm{F}}_{\mathrm{c}}^{(t_2+1)},\bm{\mathrm{W}}_{\mathrm{c}}^{(t_2+1)},\bm{\mathrm{w}}_{\mathrm{s}}^{(t_2+1)}, \bm{\mathrm{\theta }}^{(t_2+1)}\}$ using Algorithm \ref{alg:algorithm3}, as all other steps involve explicit closed-form solutions. 
\begin{algorithm}[t]
    \caption{Joint optimization of $\{\bm{\mathrm{F}}_{\mathrm{c}},\bm{\mathrm{W}}_{\mathrm{c}},\bm{\mathrm{w}}_{\mathrm{s}},\bm{\mathrm{\theta }}\}$, Given $\{\bm{\mathrm{\delta}},\bm{\mathrm{\lambda }}\}$.}
    \label{alg:algorithm3}
    \begin{algorithmic}[1]
     %\fontsize{7pt}{8pt}
      \State Initialize iteration number $t_3 = 0$. Set the maximum number of iterations to $t_3^{\max}$. Initialize feasible $\bm{\mathrm{F}}_{\mathrm{c}}^{(0)}$, $\bm{\mathrm{w}}_{\mathrm{s}}^{(0)}$ and $\bm{\mathrm{\theta }}^{(0)}$ following constraints (\ref{eqa}) - (\ref{eqd}). Calculate $R_k^{(0)}$ using (\ref{equr}).
      \State  \textbf{Repeat} 
      \State \quad Update $\bm{\mathrm{F}}_{\mathrm{c}}^{(t_3+1)}$ by solving Problem $2 \text{-}5$ with the SCA method. \label{stepf}
      \State \quad Calculate $\bm{\mathrm{W}}_{\mathrm{c}}^{(t_3+1)}$ and $\bm{\mathrm{D}}^{(t_3+1)}$ according to (\ref{equw}) and (\ref{equw2}), respectively. \label{stepw}
      \State \quad Calculate $\bm{\mathrm{w}}_{\mathrm{s}}^{(t_3+1)}$ by using (\ref{equ12}). \label{stepws}
      \State \quad Update $\bm{\mathrm{\theta }}^{(t_3+1)}$ using MM algorithm. \label{steptheta}
      \State \quad $t_3 \gets t_3 + 1$.
      \State  \textbf{Until} $t_3=t_3^{\rm{max}}$ \textbf{or}
      \begin{small}
        \begin{align}
            \epsilon _3^{(t_1+1)} = \frac{\left\lvert \mathrm{obj}^{(t_3+1)}(\bm{\mathrm{F}}_{\mathrm{c}},\bm{\mathrm{\theta }}) - \mathrm{obj}^{(t_3)}(\bm{\mathrm{F}}_{\mathrm{c}},\bm{\mathrm{\theta }}) \right\rvert }{\mathrm{obj}^{(t_3+1)}(\bm{\mathrm{F}}_{\mathrm{c}},\bm{\mathrm{\theta }})} < \epsilon.\nonumber
        \end{align}
      \end{small}
    \end{algorithmic}
  \end{algorithm}
 
\subsection{Summarize of the Algorithm Development}
Following the above discussions, we present the overall BCD algorithm for solving Problem $\mathcal{P} 0$ in Algorithm \ref{alg:algorithm4}. The monotonic increase in the OF value of Problem $\mathcal{P} 0$ at each step of Algorithm \ref{alg:algorithm4} is readily verified. Furthermore, the edge computing resource constraints impose an upper bound on the OF value. Hence, Algorithm \ref{alg:algorithm4} is guaranteed to converge. The complexity of Algorithm \ref{alg:algorithm4} is mainly determined by Step \ref{step1} and Step \ref{step2}, the complexities of which have been thoroughly discussed in the preceding subsections. 
\begin{algorithm}[b]
    \caption{Joint optimization of $\bm{\mathrm{v}}$, $\bm{\mathrm{f}}$, $\bm{\mathrm{F}}_{\mathrm{c}}$ and $\bm{\mathrm{\theta }}$.}
    \label{alg:algorithm4}
    \begin{algorithmic}[1]
     %\fontsize{7pt}{8pt}
      \State Initialize iteration number $t_4 = 0$. Set the maximum number of iterations to $t_4^{\max}$. Initialize feasible $\bm{\mathrm{f}}^{(0)}$, $\bm{\mathrm{F}}_{\mathrm{c}}^{(0)}$, $\bm{\mathrm{w}}_{\mathrm{s}}^{(0)}$ and $ \bm{\mathrm{\theta }}^{(0)}$ following constraints (\ref{eqa}) - (\ref{eqc}) and (\ref{eqf}).
      \State  \textbf{Repeat} 
      \State \quad Calculate $\bm{\mathrm{v}}^{(t_4+1)}$ and $\bm{\mathrm{f}}^{(t_4+1)}$ via Algorithm \ref{alg:algorithm1}. \label{step1}
      \State \quad Calculate $\bm{\mathrm{F}}_{\mathrm{c}}^{(t_4+1)}$, $\bm{\mathrm{W}}_{\mathrm{c}}^{(t_4+1)}$, $\bm{\mathrm{w}}_{\mathrm{s}}^{(t_4+1)}$ and $\bm{\mathrm{\theta }}^{(t_4+1)}$ via Algorithm \ref{alg:algorithm2}.\label{step2}
      \State \quad $t_4 \gets t_4 + 1$.
      \State  \textbf{Until} $t_3=t_3^{\rm{max}}$ \textbf{or}
      \begin{small}
        \begin{align}
            \epsilon _4^{(t_1+1)} = \frac{\left\lvert \mathrm{obj}^{(t_4+1)}(\bm{\mathrm{v}},\bm{\mathrm{f}},\bm{\mathrm{F}}_{\mathrm{c}},\bm{\mathrm{\theta }}) - \mathrm{obj}^{(t_4)}(\bm{\mathrm{v}},\bm{\mathrm{f}},\bm{\mathrm{F}}_{\mathrm{c}},\bm{\mathrm{\theta }}) \right\rvert }{\mathrm{obj}^{(t_4+1)}(\bm{\mathrm{v}},\bm{\mathrm{f}},\bm{\mathrm{F}}_{\mathrm{c}},\bm{\mathrm{\theta }})} < \epsilon.\nonumber
        \end{align}
      \end{small}
    \end{algorithmic}
  \end{algorithm}  
   %\vspace{0.2cm}
\section{Specific Case Study}
\hyphenation{sen-sing}
A special case with a single UE that transmits a single data stream is studied in this section to provide a comprehensive characterization of the RIS-assisted ICSC system. Since the number of the data streams $d$ is assumed to be 1, the radar SINR constraint (\ref{eqb}) degenerates to the SNR constraint. Furthermore, as there is no need for the consideration of edge computing resource allocation, the sum-of-ratios form in the OF of Problem $\mathcal{P} 2$ is transformed into a single-ratio form. In this instance, Problem $\mathcal{P}0$ is simplified as follows:
\begin{subequations}
\label{equ56}
    \begin{align}
        \mathcal{P}3:\min_{\bm{\mathrm{f}}_{\mathrm{c}},\bm{\mathrm{w}}_{\mathrm{c}},\bm{\mathrm{w}}_{\mathrm{s}},\bm{\mathrm{\theta }},v} & T(\bm{\mathrm{f}}_{\mathrm{c}},\bm{\mathrm{w}}_{\mathrm{c}},\bm{\mathrm{\theta}},v)  \\
        s.t. \quad \ 
        & \mathrm{Tr}(\bm{\mathrm{R}}_{\bm{\mathrm{X}}}) \leqslant P_{t}, \label{aaa} \\
        & \gamma \geqslant  \eta , \label{bbb}\ \\
        & 0 < \theta _l \leqslant 2\pi ,\forall l \in \mathcal{L}, \label{ccc}\\
        & v  \in \{0,1,\cdots,V \} . \label{ddd} 
    \end{align}
\end{subequations}
where $\eta$ represents the SNR threshold of the UE for radar sensing, $\bm{\mathrm{f}}_{\mathrm{c}} \in \mathbb{C} ^{N \times 1} $ and $\bm{\mathrm{w}}_{c}\in \mathbb{C} ^{M \times 1}$ denote the precoding and decoding vectors, respectively.

The BCD method is employed to solve Problem $\mathcal{P}3$. As discussed in Sec. \ref{secv}, given $\{\bm{\mathrm{f}}_{\mathrm{c}},\bm{\mathrm{w}}_{\mathrm{c}},\bm{\mathrm{w}}_{\mathrm{s}},\bm{\mathrm{\theta }}\}$, the optimal solution for $v$ is obtained when $T_{\mathrm{l}} = T_{\mathrm{c}}$. Consequently, the optimal $v$ is equal to
\begin{equation}
    v ^{\star } = \arg \min _{\hat{v }\in \{ \left\lfloor \hat{v}^{\star} \right\rfloor, \left\lceil \hat{v}^{\star} \right\rceil  \}} 
    T(\hat{v}),
\end{equation}
where
\begin{equation}
\label{equ57}
    \hat{v }^{\star} = \frac{V c R f_{\mathrm{total}}^e}{f_{\mathrm{total}}^e f^l + c R(f_{\mathrm{total}}^e + f^l)}.
\end{equation}
Given $v$, Problem $\mathcal{P}3$ can be rewritten as 
\begin{equation}
\label{equ59}
    \begin{aligned}
        \mathcal{P}3\text{-}1:\max_{\bm{\mathrm{f}}_{\mathrm{c}},\bm{\mathrm{w}}_{\mathrm{c}},\bm{\mathrm{w}}_{\mathrm{s}},\bm{\mathrm{\theta }}} &R(\bm{\mathrm{f}}_{\mathrm{c}},\bm{\mathrm{w}}_{\mathrm{c}},\bm{\mathrm{\theta }}) \\
        s.t. \quad 
        & (\ref{aaa}),(\ref{bbb}),(\ref{ccc}).
    \end{aligned}
\end{equation}
Since $R(\bm{\mathrm{f}}_{\mathrm{c}},\bm{\mathrm{w}}_{\mathrm{c}},\bm{\mathrm{\theta }})$ monotonically increases w.r.t. $\frac{ \left\lvert \bm{\mathrm{w}}_{c}^{\mathrm{H}} \bm{\mathrm{H}}\bm{\mathrm{f}}_{c} \right\rvert ^2 }{\sigma_c ^2 \left\lVert \bm{\mathrm{w}}_{c}^{\mathrm{H}} \right\rVert ^2}$,  Problem $\mathcal{P}3\text{-}1$ is equivalent to
\begin{equation}
        \begin{aligned}
            \mathcal{P}3\text{-}2:\max_{\bm{\mathrm{f}}_{\mathrm{c}},\bm{\mathrm{w}}_{\mathrm{c}},\bm{\mathrm{w}}_{\mathrm{s}},\bm{\mathrm{\theta }}} & \frac{ \left\lvert \bm{\mathrm{w}}_{c}^{\mathrm{H}} \bm{\mathrm{H}}\bm{\mathrm{f}}_{c} \right\rvert ^2 }{\sigma_c ^2 \left\lVert \bm{\mathrm{w}}_{c}^{\mathrm{H}} \right\rVert ^2} \\
            s.t. \quad 
            & (\ref{aaa}),(\ref{bbb}),(\ref{ccc}).
        \end{aligned}
\end{equation}
Note that $\bm{\mathrm{w}}_{\mathrm{s}}$ is only related to constraint (\ref{bbb}),  while $\bm{\mathrm{w}}_{\mathrm{c}}$ is related to the OF of Problem $\mathcal{P}3\text{-}2$. Therefore, the optimal solution for the decoding matrices $\bm{\mathrm{w}}_{\mathrm{s}}$ and $\bm{\mathrm{w}}_{\mathrm{c}}$ can be derived based on the maximum ratio combining (MRC) criterion as 
\begin{align}
    &\bm{\mathrm{w}}_{\mathrm{s}}^{\star} = \arg \max\left(\frac{\left\lvert\bm{\mathrm{w}}_{s}^{\mathrm{H}}\bm{\mathrm{G}}\bm{\mathrm{f}}_{c} \right\rvert^2 }{\left\lVert \bm{\mathrm{w}}_{s}^{\mathrm{H}}\right\rVert^2 }\right)=\bm{\mathrm{G}}\bm{\mathrm{f}}_{c},\label{equ63} \\
    &\bm{\mathrm{w}}_{\mathrm{c}}^{\star} = \arg \max\left(\frac{\left\lvert \bm{\mathrm{w}}_{c}^{\mathrm{H}}\bm{\mathrm{H}}\bm{\mathrm{f}}_{c} \right\rvert ^2 }{ \left\lVert \bm{\mathrm{w}}_{c}^{\mathrm{H}} \right\rVert ^2 }\right)= \bm{\mathrm{H}}\bm{\mathrm{f}}_{c}.\label{equ64}
\end{align}
By defining $ \bm{\mathrm{h}} = \frac{\bm{\mathrm{H}}^{\mathrm{H}}\bm{\mathrm{w}}_{c}}{\sigma_c \left\lVert \bm{\mathrm{w}}_{c}^{\mathrm{H}} \right\rVert} \in \mathbb{C}^{N\times 1}$ and $\bm{\mathrm{g}} = \frac{\bm{\mathrm{G}}^{\mathrm{H}}\bm{\mathrm{w}}_{s}}{\sigma_s \left\lVert \bm{\mathrm{w}}_{s}^{\mathrm{H}} \right\rVert} \in \mathbb{C}^{N\times 1}$, the subproblem of optimizing $\bm{\mathrm{f}}_{\mathrm{c}}$ can be formulated as follows:
\begin{subequations}
    \label{equff}
        \begin{align}
            \mathcal{P}3\text{-}3:\max_{\bm{\mathrm{f}}_{\mathrm{c}}} \quad&\left\lvert \bm{\mathrm{h}}^{\mathrm{H}} \bm{\mathrm{f}}_{c} \right\rvert  ^2 \\
            s.t. \quad 
            & \left\lVert \bm{\mathrm{f}}_{c} \right\rVert  ^2 \leqslant P_{t}, \label{transmit} \\
            & \left\lvert \bm{\mathrm{g}}^{\mathrm{H}} \bm{\mathrm{f}}_{c} \right\rvert  ^2 \geqslant  \eta .
        \end{align}
\end{subequations}
The following proposition contributes to addressing Problem $\mathcal{P}3\text{-}3$ \cite{38,39}. 
\begin{proposition}
\label{pro2}
The optimal solution to Problem $\mathcal{P}3\text{-}3$ can be expressed as
\begin{equation}
    \bm{\mathrm{f}}_{\mathrm{c}} = a \bm{\mathrm{h}} + b \bm{\mathrm{g}},
\end{equation}
where $a, \, b \in \mathbb{C}$ are given as follows:
\begin{enumerate}[{Case} 1:]
    \item If $ \eta \leqslant \frac{P_t |\bm{\mathrm{g}}^{\mathrm{H}} \bm{\mathrm{h}}|^2 }{\left\lVert \bm{\mathrm{h}} \right\rVert ^2}$,
    \begin{equation}
        \left\{
        \begin{aligned}
            & |a| = \frac{\sqrt{P_t}}{\left\lVert \bm{\mathrm{h}} \right\rVert} ,\\
            & |b| = 0,
        \end{aligned}
        \right.
    \end{equation}
    with an arbitrary phase shift.
    \label{case1}
    \item If $ \frac{P_t |\bm{\mathrm{g}}^{\mathrm{H}} \bm{\mathrm{h}}|^2 }{\left\lVert \bm{\mathrm{h}} \right\rVert ^2} \leqslant \eta \leqslant P_t \left\lVert \bm{\mathrm{g}} \right\rVert ^2$,
    \begin{equation}
        \left\{
        \begin{aligned}
            & |a| = \sqrt{\frac{P_t \left\lVert \bm{\mathrm{g}} \right\rVert^2 - \eta}{\left\lVert \bm{\mathrm{h}} \right\rVert^2 \left\lVert \bm{\mathrm{g}} \right\rVert^2 - |\bm{\mathrm{h}}^{\mathrm{H}} \bm{\mathrm{g}} |^2 }},\\
            & |b| = \frac{\sqrt{\eta}}{\left\lVert \bm{\mathrm{g}} \right\rVert^2} - \frac{|\bm{\mathrm{h}}^{\mathrm{H}} \bm{\mathrm{g}} |}{\left\lVert \bm{\mathrm{g}} \right\rVert^2} |a|,
        \end{aligned}
        \right.
    \end{equation}
    where the phase of $a$ and $b$ should satisfy
    \begin{equation}
    \angle(a) - \angle(b) = \angle(\bm{\mathrm{h}}^{\mathrm{H}} \bm{\mathrm{g}}).
    \end{equation}
    \label{case2}
    \item If $\eta > P_t \left\lVert \bm{\mathrm{g}} \right\rVert ^2$, there exists no feasible solution. \label{case3}
\end{enumerate}
\end{proposition}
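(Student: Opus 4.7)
My plan is to prove Proposition \ref{pro2} via a reduction to a two-dimensional parameterization followed by a case analysis driven by the SNR threshold $\eta$. First, I would argue that any optimal $\bm{\mathrm{f}}_{\mathrm{c}}$ can be taken to lie in $\mathrm{span}\{\bm{\mathrm{h}}, \bm{\mathrm{g}}\}$. Decomposing $\bm{\mathrm{f}}_{\mathrm{c}} = a\bm{\mathrm{h}} + b\bm{\mathrm{g}} + \bm{\mathrm{f}}_\perp$ with $\bm{\mathrm{f}}_\perp$ orthogonal to both $\bm{\mathrm{h}}$ and $\bm{\mathrm{g}}$, the objective $\lvert\bm{\mathrm{h}}^{\mathrm{H}}\bm{\mathrm{f}}_{\mathrm{c}}\rvert^2$ and the SNR quantity $\lvert\bm{\mathrm{g}}^{\mathrm{H}}\bm{\mathrm{f}}_{\mathrm{c}}\rvert^2$ are unchanged by $\bm{\mathrm{f}}_\perp$, while the power $\lVert\bm{\mathrm{f}}_{\mathrm{c}}\rVert^2 = \lVert a\bm{\mathrm{h}} + b\bm{\mathrm{g}}\rVert^2 + \lVert\bm{\mathrm{f}}_\perp\rVert^2$ is minimized at $\bm{\mathrm{f}}_\perp = \bm{\mathrm{0}}$. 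Thus the problem reduces without loss of optimality to a search over the two complex scalars $(a,b)$.

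Next, I would dispose of Case \ref{case3} immediately via Cauchy--Schwarz: if $\eta > P_t\lVert\bm{\mathrm{g}}\rVert^2$, then $\lvert\bm{\mathrm{g}}^{\mathrm{H}}\bm{\mathrm{f}}_{\mathrm{c}}\rvert^2 \leq \lVert\bm{\mathrm{g}}\rVert^2 \lVert\bm{\mathrm{f}}_{\mathrm{c}}\rVert^2 \leq P_t\lVert\bm{\mathrm{g}}\rVert^2 < \eta$, so no feasible $\bm{\mathrm{f}}_{\mathrm{c}}$ exists. For Case \ref{case1}, the unconstrained maximum-ratio transmission (MRT) choice $\bm{\mathrm{f}}_{\mathrm{c}} = \sqrt{P_t}\bm{\mathrm{h}}/\lVert\bm{\mathrm{h}}\rVert$ (corresponding to $b=0$ with arbitrary $\angle(a)$) maximizes $\lvert\bm{\mathrm{h}}^{\mathrm{H}}\bm{\mathrm{f}}_{\mathrm{c}}\rvert^2$ subject only to (\ref{transmit}), attaining $P_t\lVert\bm{\mathrm{h}}\rVert^2$. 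Substituting into the SNR term yields $\lvert\bm{\mathrm{g}}^{\mathrm{H}}\bm{\mathrm{f}}_{\mathrm{c}}\rvert^2 = P_t\lvert\bm{\mathrm{h}}^{\mathrm{H}}\bm{\mathrm{g}}\rvert^2/\lVert\bm{\mathrm{h}}\rVert^2$, which satisfies the SNR constraint precisely when $\eta \leq P_t\lvert\bm{\mathrm{h}}^{\mathrm{H}}\bm{\mathrm{g}}\rvert^2/\lVert\bm{\mathrm{h}}\rVert^2$; hence the MRT direction is optimal.

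For Case \ref{case2}, both constraints must be active at the optimum: the MRT solution violates the SNR constraint in this regime so the SNR constraint must bind, and monotonicity of the objective in $\lVert\bm{\mathrm{f}}_{\mathrm{c}}\rVert$ along a fixed direction forces the power constraint to bind as well. I would then optimize over the phases of $a,b$: by the triangle inequality, for fixed $\lvert a\rvert,\lvert b\rvert$, the objective $\lvert a\lVert\bm{\mathrm{h}}\rVert^2 + b\bm{\mathrm{h}}^{\mathrm{H}}\bm{\mathrm{g}}\rvert$ is maximized when its two summands are co-phased, which yields exactly the stated condition $\angle(a) - \angle(b) = \angle(\bm{\mathrm{h}}^{\mathrm{H}}\bm{\mathrm{g}})$. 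Under this alignment, the active SNR equation simplifies to $\lvert a\rvert\lvert\bm{\mathrm{h}}^{\mathrm{H}}\bm{\mathrm{g}}\rvert + \lvert b\rvert\lVert\bm{\mathrm{g}}\rVert^2 = \sqrt{\eta}$, and the active power equation simplifies to $\lvert a\rvert^2\lVert\bm{\mathrm{h}}\rVert^2 + \lvert b\rvert^2\lVert\bm{\mathrm{g}}\rVert^2 + 2\lvert a\rvert\lvert b\rvert\lvert\bm{\mathrm{h}}^{\mathrm{H}}\bm{\mathrm{g}}\rvert = P_t$. Solving the first for $\lvert b\rvert$ and substituting into the second yields a linear equation in $\lvert a\rvert^2$, producing the stated closed-form expression for $\lvert a\rvert$; the formula for $\lvert b\rvert$ then follows directly from the SNR equation.

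The main obstacle is ensuring that the phase alignment used to maximize the objective does not inadvertently weaken the SNR constraint. This is resolved by a direct calculation: the very same condition $\angle(a) - \angle(b) = \angle(\bm{\mathrm{h}}^{\mathrm{H}}\bm{\mathrm{g}})$ also co-phases the two terms of $\bm{\mathrm{g}}^{\mathrm{H}}\bm{\mathrm{f}}_{\mathrm{c}} = a\bm{\mathrm{g}}^{\mathrm{H}}\bm{\mathrm{h}} + b\lVert\bm{\mathrm{g}}\rVert^2$, so $\lvert\bm{\mathrm{g}}^{\mathrm{H}}\bm{\mathrm{f}}_{\mathrm{c}}\rvert$ is likewise maximized for the given magnitudes and feasibility is preserved. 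A subsidiary check is that the computed $\lvert b\rvert$ is nonnegative throughout Case \ref{case2}, which follows from the lower threshold $\eta \geq P_t\lvert\bm{\mathrm{h}}^{\mathrm{H}}\bm{\mathrm{g}}\rvert^2/\lVert\bm{\mathrm{h}}\rVert^2$ after substituting the closed-form expression for $\lvert a\rvert$; at this lower threshold one recovers exactly the Case \ref{case1} boundary $\lvert b\rvert = 0$, confirming the analysis is continuous across the case boundaries.
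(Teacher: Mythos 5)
Your reduction to $\mathrm{span}\{\bm{\mathrm{h}},\bm{\mathrm{g}}\}$, your treatment of Cases \ref{case1} and \ref{case3}, and your closing algebra are all consistent with the result, and in places more elementary than the paper: the paper obtains the span structure and the case dichotomy from the KKT conditions of Problem (\ref{equfff}) (stationarity (\ref{equ89}) yields $\bm{\mathrm{f}}_{\mathrm{c}}\in\mathrm{span}(\bm{\mathrm{h}},\bm{\mathrm{g}})$, and complementary slackness (\ref{equ87}) yields the split between $\varphi=0$ and an active SNR constraint), whereas you get the span by an orthogonal-complement argument and Case \ref{case1} by directly checking when the MRT point is feasible. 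Your two active equations coincide with the paper's (\ref{equ93}) and (\ref{equ97}), you solve them correctly, and your $|b|\geqslant 0$ threshold check matches the paper's justification of the Case \ref{case2} condition.

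However, there is a genuine gap in your Case \ref{case2} phase argument. You fix $(|a|,|b|)$, co-phase the two summands of $\bm{\mathrm{h}}^{\mathrm{H}}\bm{\mathrm{f}}_{\mathrm{c}} = a\lVert\bm{\mathrm{h}}\rVert^2 + b\,\bm{\mathrm{h}}^{\mathrm{H}}\bm{\mathrm{g}}$, verify that the same alignment co-phases $\bm{\mathrm{g}}^{\mathrm{H}}\bm{\mathrm{f}}_{\mathrm{c}}$, and conclude that ``feasibility is preserved.'' But in the non-orthogonal parameterization $\bm{\mathrm{f}}_{\mathrm{c}} = a\bm{\mathrm{h}} + b\bm{\mathrm{g}}$ the transmit power also depends on the relative phase: $\lVert\bm{\mathrm{f}}_{\mathrm{c}}\rVert^2 = |a|^2\lVert\bm{\mathrm{h}}\rVert^2 + |b|^2\lVert\bm{\mathrm{g}}\rVert^2 + 2\mathrm{Re}\{a^{*}b\,\bm{\mathrm{h}}^{\mathrm{H}}\bm{\mathrm{g}}\}$. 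Writing $\phi = \angle(b)-\angle(a)+\angle(\bm{\mathrm{h}}^{\mathrm{H}}\bm{\mathrm{g}})$, the objective, the SNR quantity, and the power are all increasing in $\cos\phi$, so your exchange, applied at a point where the power constraint binds (as you yourself argued it must), pushes $\lVert\bm{\mathrm{f}}_{\mathrm{c}}\rVert^2$ strictly above $P_t$ whenever the original phases are misaligned; the co-phased competitor is infeasible, and rescaling it back into the power budget shrinks the objective quadratically, so no clean dominance inequality survives. The conclusion is true, but the step as written does not establish it. Two standard repairs: (i) re-parameterize in an orthonormal basis of the span, e.g. $\bm{\mathrm{e}}_1 = \bm{\mathrm{g}}/\lVert\bm{\mathrm{g}}\rVert$ and $\bm{\mathrm{e}}_2\perp\bm{\mathrm{e}}_1$, so that both constraints depend only on the coefficient magnitudes and the phase optimization of the objective becomes genuinely unconstrained, then translate back to $(a,b)$; or (ii) follow the paper and read the phase relations off the stationarity equation (\ref{equ98}), where the real nonnegative multipliers $\varphi,\iota$ force (\ref{equb19})--(\ref{equb20}) and the power constraint is accounted for automatically by the multiplier bookkeeping.
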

\begin{proof}
    Please refer to Appendix \ref{app2}.
\end{proof}
Given $\{v,\bm{\mathrm{f}}_{\mathrm{c}},\bm{\mathrm{w}}_{\mathrm{s}},\bm{\mathrm{w}}_{\mathrm{c}}\}$, the following inequality holds for the OF of Problem $\mathcal{P}3\text{-}2$.
\begin{small}
\begin{align}
    \frac{|\bm{\mathrm{w}}_{c}^{\mathrm{H}} (\bm{\mathrm{H}}_{\mathrm{bu}}+\bm{\mathrm{H}}_{\mathrm{r}} \bm{\mathrm{V}} \bm{\mathrm{H}}_{\mathrm{ru}})\bm{\mathrm{f}}_{c}|^2}{\sigma_c ^2 \left\lVert\bm{\mathrm{w}}_{c}^{\mathrm{H}} \right\rVert ^2}
    \leqslant  \frac{|\bm{\mathrm{w}}_{c}^{\mathrm{H}} \bm{\mathrm{H}}_{\mathrm{bu}}\bm{\mathrm{f}}_{c}|^2}{\sigma_c ^2 \left\lVert\bm{\mathrm{w}}_{c}^{\mathrm{H}} \right\rVert^2} + \frac{|\bm{\mathrm{w}}_{c}^{\mathrm{H}} \bm{\mathrm{H}}_{\mathrm{r}} \bm{\mathrm{V}} \bm{\mathrm{H}}_{\mathrm{ru}}\bm{\mathrm{f}}_{c}|^2}{\sigma_c ^2 \left\lVert\bm{\mathrm{w}}_{c}^{\mathrm{H}} \right\rVert ^2}.
\end{align}
\end{small}%
%The equality of the above equation holds when $\arg\{\bm{\mathrm{w}}_{c}^{\mathrm{H}} \bm{\mathrm{H}}_{bu}\bm{\mathrm{f}}_{c}\} = \arg \{\bm{\mathrm{w}}_{c}^{\mathrm{H}} \bm{\mathrm{H}}_{\mathrm{r}} \bm{\mathrm{V}} \bm{\mathrm{H}}_{ru}\bm{\mathrm{f}}_{c}\}$. 
Therefore, the optimal solution of $\bm{\mathrm{\theta }}$ can be obtained via the following equation
\begin{equation}
\label{equ69}
    \bm{\mathrm{\theta }} = \angle (\bm{\mathrm{w}}_{c}^{\mathrm{H}} \bm{\mathrm{H}}_{\mathrm{bu}}\bm{\mathrm{f}}_{c}) - \angle (\mathrm{diag}\{\bm{\mathrm{w}}_{c}^{\mathrm{H}} \bm{\mathrm{H}}_{\mathrm{r}}\}\bm{\mathrm{H}}_{\mathrm{ru}}\bm{\mathrm{f}}_{c}).
\end{equation}
The overall algorithm for Problem $\mathcal{P}3$ is summarized in Algorithm \ref{alg:algorithm5}. Next, we analyze the complexity of this algorithm. Note that the complexity of Algorithm \ref{alg:algorithm5} is dominated by Steps 4\,-\,6. Firstly, the complexities of using the MRC criterion to calculate $\bm{\mathrm{w}}_{\mathrm{c}}^{(t_5+1)}$ and $\bm{\mathrm{w}}_{\mathrm{s}}^{(t_5+1)}$ are on the order of $\mathcal{O} (N^2)$ and $\mathcal{O} (MN)$, respectively. Secondly, the complexity of calculating $\bm{\mathrm{f}}_{\mathrm{c}}^{(t_5+1)}$ following Proposition \ref{pro2} is given by $\mathcal{O} (\max\{MN,N^2\})$. Finally, the complexity of calculating $\bm{\mathrm{\theta }}^{(t_5+1)}$ by using (\ref{equ69}) is $\mathcal{O} (L^2N)$. Considering that $L\gg \{M,N\}$, the complexity of Algorithm \ref{alg:algorithm5} is on the order of $\mathcal{O} (L^2N)$.
\begin{algorithm}[t]
    \caption{Joint optimization of $v$, $\bm{\mathrm{w}}_{\mathrm{c}}$, $\bm{\mathrm{w}}_{\mathrm{s}}$ ,$\bm{\mathrm{f}}_{\mathrm{c}}$ and $\bm{\mathrm{\theta }}$}
    \label{alg:algorithm5}
    \begin{algorithmic}[1]
     %\fontsize{7pt}{8pt} 
      \State Initialize iteration number $t_5 = 0$. Set the maximum number of iterations to $t_5^{\max}$. Initialize feasible $\bm{\mathrm{f}}_{\mathrm{c}}^{(0)}$, $\bm{\mathrm{w}}_{\mathrm{c}}^{(0)}$, $\bm{\mathrm{w}}_{\mathrm{s}}^{(0)}$ and $ \bm{\mathrm{\theta }}^{(0)}$ following constraints (\ref{aaa}) - (\ref{ddd}).
      \State  \textbf{Repeat} 
      \State \quad Calculate $v$ by using (\ref{equ57})\label{stepv}
      \State \quad Calculate $\bm{\mathrm{w}}_{\mathrm{c}}^{(t_5+1)}$ and $\bm{\mathrm{w}}_{\mathrm{s}}^{(t_5+1)}$ by using (\ref{equ63}) and (\ref{equ64}), respectively.
      \State \quad Calculate $\bm{\mathrm{f}}_{\mathrm{c}}^{(t_5+1)}$ following  Proposition \ref{pro2}.
      \State \quad Calculate $\bm{\mathrm{\theta }}^{(t_5+1)}$ by using (\ref{equ69}).
      \State \quad $t_5 \gets t_5 + 1$.
      \State  \textbf{Until} $t_5=t_5^{\rm{max}}$ \textbf{or}
      \begin{small}
        \begin{align}
            \epsilon _5^{(t_5+1)} = \frac{\left\lvert \mathrm{obj}^{(t_5+1)}(v, \bm{\mathrm{w}}_{\mathrm{c}}, \bm{\mathrm{f}}_{\mathrm{c}}, \bm{\mathrm{\theta }}) - \mathrm{obj}^{(t_5)}(v, \bm{\mathrm{w}}_{\mathrm{c}}, \bm{\mathrm{f}}_{\mathrm{c}}, \bm{\mathrm{\theta }}) \right\rvert }{\mathrm{obj}^{(t_5+1)}(v, \bm{\mathrm{w}}_{\mathrm{c}},\bm{\mathrm{f}}_{\mathrm{c}},\bm{\mathrm{\theta }})} < \epsilon .\nonumber
        \end{align}
      \end{small}
    \end{algorithmic}
\end{algorithm}
\section{Simulation Results}
\hyphenation{co-rresponding}
In this section, numerical results are provided to validate the performance of the proposed RIS-assisted ICSC system in both single-UE and multi-UE scenarios. As shown in Fig. \ref{fig2}, the BS and the RIS are located at (0, 0) and (200 \textrm{m}, 0), respectively. In the multi-UE scenario, we assume that two UEs are located at (240 \textrm{m}, 50 \textrm{m}) and (250 \textrm{m}, -50 \textrm{m}), with corresponding targets at (260 \textrm{m}, 50 \textrm{m}) and (250 \textrm{m}, -70 \textrm{m}), respectively. For the single-UE scenario, the UE is located at (250 \textrm{m}, 50 \textrm{m}) with its target at (270 \textrm{m}, 50 \textrm{m}). 

\begin{figure}%[H]
    \centering
    \includegraphics[width=2.3in]{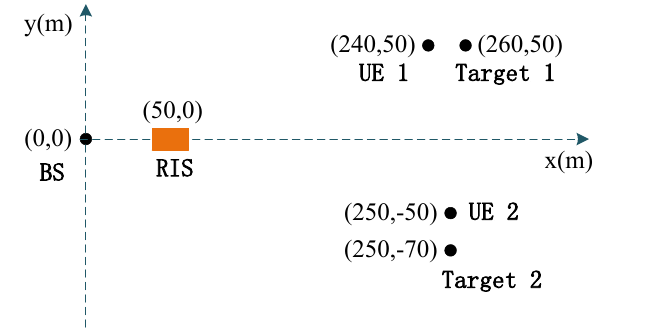}
    \caption{The simulated multi-UE RIS-assisted ICSC scenario}
    \label{fig2}
\end{figure}

\hyphenation{a-ccount}
In the context of wireless communication channels, we account for both small-scale fading and large-scale path loss. In specific, the large-scale path loss in \textrm{dB} is expressed as
\begin{equation}
    \mathrm{PL} = \mathrm{PL}_0 - 10 \alpha \log_{10}  \left(\frac{d}{d_0}\right), 
\end{equation}
where $\mathrm{PL}_0$ is the path loss at the reference distance $d_0$, $d$ represents the link distance and $\alpha$ denotes the corresponding path loss exponent. 
The path loss exponent of the UE-BS link, UE-RIS link,  RIS-BS link, and UE-UE interference channel are denoted by $\alpha_{\mathrm{bu}}$, $\alpha_{\mathrm{ru}}$, and $\alpha_{\mathrm{r}}$, $\alpha_{\mathrm{uu}}$, respectively. The Rayleigh channel model is employed to characterize the communication channels between the BS and UEs, as well as the channels between UEs, due to the presence of potential scatters and obstacles. While the small-scale fading of channels associated with the RIS is assumed to follow a Rician fading distribution:
\begin{equation}
    \bm{\mathrm{\widetilde{H}}} = \sqrt{\frac{K_R}{1+K_R}}\bm{\mathrm{\widetilde{H}}}_{\mathrm{LoS}} +  \sqrt{\frac{1}{1+K_R}}\bm{\mathrm{\widetilde{H}}}_{\mathrm{NLoS}},
\end{equation}
where $K_R$ is the Rician factor, $\bm{\mathrm{\widetilde{H}}}_{\mathrm{LoS}}$ denotes the line of sight (LoS) component, and $\bm{\mathrm{\widetilde{H}}}_{\mathrm{NLoS}}$ represents the non-line of sight (NLoS) component. The LoS component $\bm{\mathrm{\widetilde{H}}}_{\mathrm{LoS}}$ is modeled as $\bm{\mathrm{\widetilde{H}}}_{\mathrm{LoS}} =\bm{\mathrm{a}}_r(\theta^{AoA}) \bm{\mathrm{a}}_t^{\mathrm{H}}(\theta^{AoD}) $, where the steering vectors w.r.t. the angle of arrival (AoA) $\theta^{AoA}$  and the angle of departure (AoD) $\theta^{AoD}$ are given by
\begin{equation}
    \label{equ72}
    \begin{aligned}
        \bm{\mathrm{a}}_r(\theta^{AoA}) = [1,e^{j\frac{2 \pi d }{\lambda}\sin{\theta^{AoA}}}, \cdots,e^{j\frac{2 \pi d }{\lambda}(D_r-1)\sin{\theta^{AoA}}} ]^{\mathrm{T}},\\
        \bm{\mathrm{a}}_t(\theta^{AoD}) = [1,e^{j\frac{2 \pi d }{\lambda}\sin{\theta^{AoD}}}, \cdots,e^{j\frac{2 \pi d }{\lambda}(D_t-1)\sin{\theta^{AoD}}} ]^{\mathrm{T}}.
    \end{aligned}
\end{equation}
In (\ref{equ72}), $D_r$ and $D_t$ represent the numbers of antennas at the receiver side and the transmitter side, respectively, and parameters $d$ and $\lambda$ denote the antenna spacing and wavelength, respectively. For simplicity, we set ${d}/{\lambda } = {1}/{2}$. The detailed settings of the above parameters are specified in the ``Communication and Radar model" block and the computing parameters $V_k$, $c_k$, $f_k^l$ and $f_{\mathrm{total}}^e$ are specified in the ``Computation model" block of Table \ref{table2}.
\begin{table}[t]
    \renewcommand\arraystretch{1.2}
    \centering
    \caption{Default Simulation Parameter Settings}
    \label{table2}
    \begin{tabular}{|p{2.5cm}|p{4.5cm}|}
        \hline
        Description &  Parameter and Value \\ \hline 
        \makecell[l]{Communication and\\ Radar model} & \makecell[l]{ $\mathrm{PL}_0 = 30$ \textrm{dB}, $d_0 = 1$ \textrm{m} \\ $\alpha_{\mathrm{bu}} = 3.75$, $\alpha_{\mathrm{r}} = 2.2$, $\alpha_{\mathrm{ru}} = 2.2$ \\ $\alpha_{\mathrm{uu}} = 2.2$, $B_\omega$ = $1$ \textrm{MHz} \\ $M = 4$, $N = 2$, $d = 2$ \\ $P_{\mathrm{t},k}^{\max} = 10$ \textrm{mW}, $\eta = 10$ \textrm{dB} \\$K_R = 3$, $L = 30$ \\ $\sigma_{\mathrm{c}} ^2 = 3.98 \times  10^{-12}$ \textrm{mW}\\ $\sigma_{\mathrm{s}} ^2= 3.98 \times  10^{-12}$ \textrm{mW}} \\ \hline
        Computation model    & \makecell[l]{ $V_k = [200, 300]$ \textrm{Kb} \\
        $c_k = [500, 600]$ \textrm{cycle/bit} \\ $f_k^l = [1\times10^8, 2\times10^8] $ \textrm{cycle/s} \\ $f_{\mathrm{total}}^e = 5 \times 10^{10} $ \textrm{cycle/s} } \\ \hline
        Weight & $\xi_k = 1/K$ \\ \hline
        Convergence criterion & $\epsilon = 10^{-3}$ \\ \hline
    \end{tabular}
\end{table}

In the subsequent subsections, we evaluate our proposed algorithms against the benchmark schemes listed below:
\begin{itemize}
    %\item \textbf{With RIS}: The precoding and decoding matrices of UEs, offloading volume, edge computational allocation, and the RIS phase shift are jointly optimized by following Algorithm \ref{alg:algorithm4} and Algorithm \ref{alg:algorithm5} in the single-UE and the multi-UE scenarios, respectively.
    \item \textbf{RandPhase}: Assuming that the RIS phase shifts are randomly generated following a uniform distribution from 0 to $2\pi$, we jointly optimize the precoding and decoding matrices of UEs, the offloading volume, and the edge computational resource allocation. This optimization is conducted according to Algorithm \ref{alg:algorithm5} and Algorithm \ref{alg:algorithm4} in the single-UE and multi-UE scenarios, respectively, while excluding the design of the RIS phase shift.
    \item \textbf{Without RIS}: The RIS-UE links are assumed to be blocked. Only the precoding and decoding matrices of UEs, along with the offloading volume and the edge computational resource allocation, are jointly optimized using Algorithm \ref{alg:algorithm5} and Algorithm \ref{alg:algorithm4} in the single-UE and multi-UE scenarios, respectively.
\end{itemize}

\begin{figure*}
    \begin{center}
        \begin{minipage}{0.47\textwidth}
            \centering
            \subfloat[]{\includegraphics[width=1.6in]{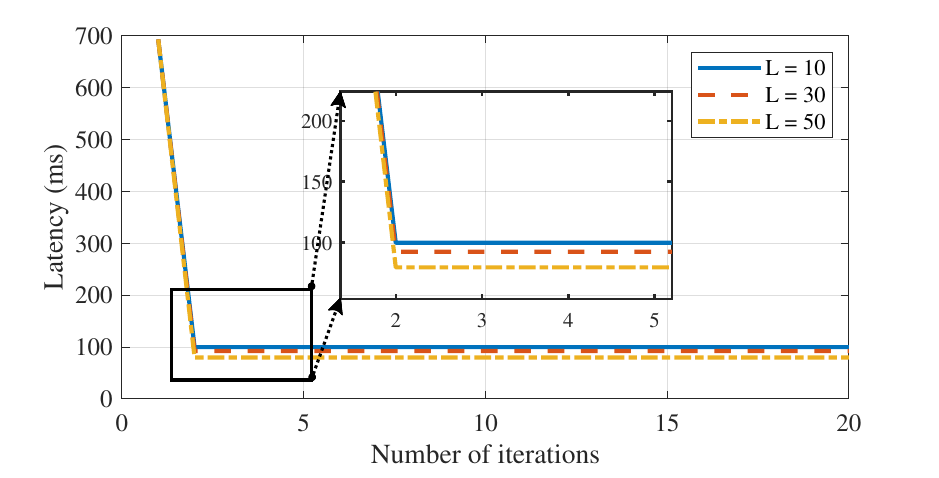} \label{figi1}} 
            \subfloat[]{ \includegraphics[width=1.6in]{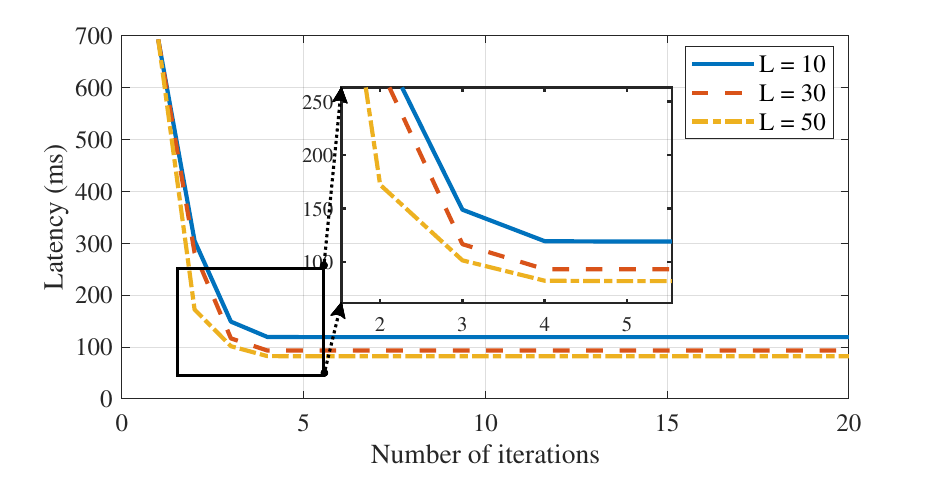} \label{figi2}} 
            \caption{Convergence behavior of the proposed algorithm for (a) the single-UE scenario using Algorithm \ref{alg:algorithm5} and (b) the multi-UE scenario using Algorithm \ref{alg:algorithm4}.}
        \label{figi}
        \end{minipage}
        \begin{minipage}{0.47\textwidth}
            \subfloat[]{\includegraphics[width=1.6in]{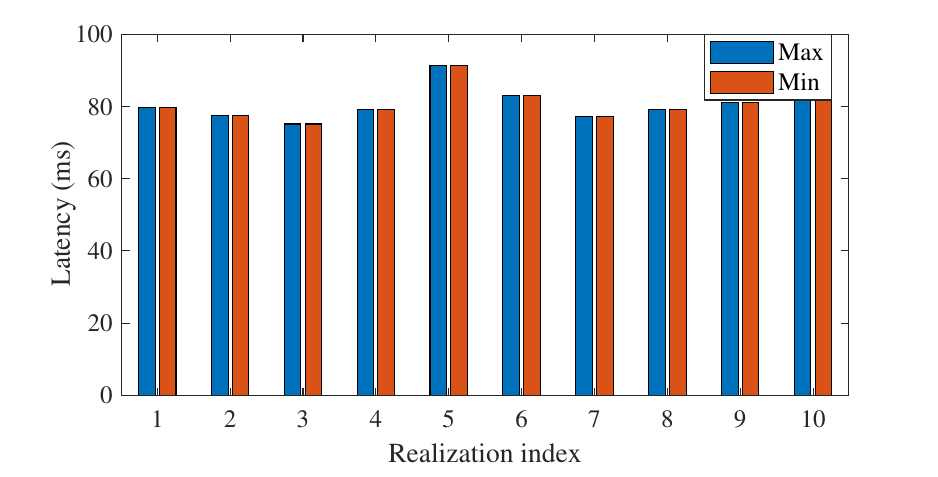} \label{figc1}} 
            \subfloat[]{\includegraphics[width=1.6in]{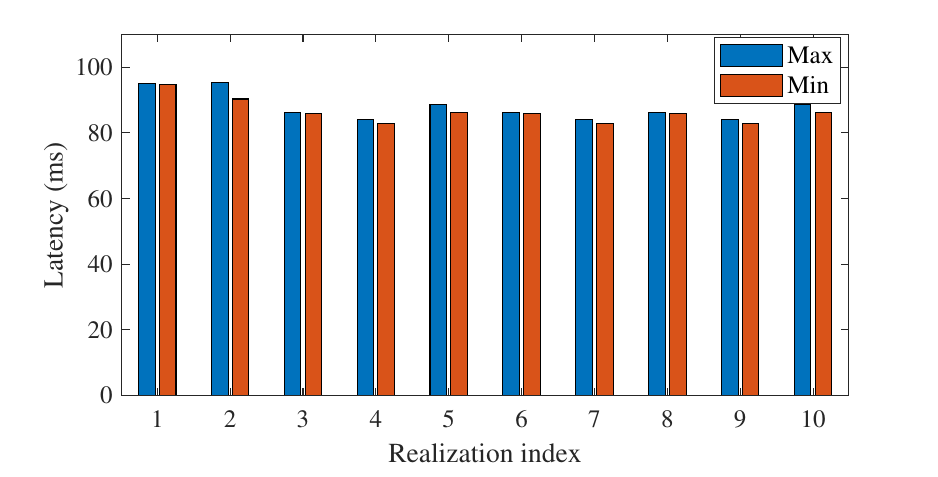} \label{figc2}} 
            \caption{Simulation results of the maximum and minimum latency versus the realization index obtained under 100 random initialization settings for (a) the single-UE scenario using Algorithm \ref{alg:algorithm5} and (b) the multi-UE scenario using Algorithm \ref{alg:algorithm4}.}
        \label{figc}
        \end{minipage}
    \end{center}
\end{figure*}
%\vspace{-0.4cm}

Fig. \ref{figi} presents the weighted latency versus the number of iterations for various numbers of RIS reflecting elements, i.e. $L = 10$, $L = 30$, and $L = 50$, in both single-UE and multi-UE scenarios. It can be seen from Fig. \ref{figi} that as the number of reflecting elements increases, the latency decreases due to the strengthened RIS-related link. Furthermore, both the proposed algorithms demonstrate rapid convergence within 5 iterations.

\hyphenation{diffe-rent}
\hyphenation{mini-mum}
Fig. \ref{figc} illustrates the algorithm performance under different initialization settings in both single-UE and multi-UE scenarios. We consider 10 channel realizations and parameters for random computational tasks, labeled 1\,-\,10. For each specific generation of channels and computational tasks, we randomly initialize the optimization variables 100 times and subsequently calculate the corresponding solutions using the algorithm outlined in the preceding section. Within these 100 result sets, the maximum latency, denoted by \textbf{Max}, represents the worst solution obtained from the algorithm, while the minimum latency, approximating the global optimal value, is designated as \textbf{Min}, as depicted in Fig. \ref{figc}. Notably, in the single-UE scenario, the values of \textbf{Max} and \textbf{Min} are similar, suggesting a high probability of achieving the globally optimal solution. In the multi-UE scenario, the maximum deviation is 5.6\%, indicating that the solutions generated by the proposed algorithm closely approach the optimal solution.

Fig. \ref{fign} depicts the weighted latency versus the number of RIS elements. In addition to the three aforementioned benchmark schemes, we also explore two scenarios where the continuous RIS phase shifts are quantized to 1 and 2 bits due to the practical hardware constraints. Specifically, in the \textbf{With RIS, 1 bit} scheme, the phase shift of RIS elements is limited to either 0 or $\pi$, whereas the \textbf{With RIS, 2 bit} scheme offers four options: $\{0, \frac{\pi}{2}, \pi, \frac{3\pi}{2}\}$. Our observations yield the following insights. Firstly, the latency of the \textbf{RandPhase} scheme exhibits a gradual decrease w.r.t. the number of RIS elements, widening the gap with the \textbf{Without RIS}  scheme. This suggests that integrating RIS can enhance communication performance even in the absence of intricate phase shift designs for the RIS elements.
Secondly, the performance improvement of the \textbf{With RIS} scheme over the \textbf{RandPhase} scheme is observed to amplify with the increase of RIS elements. This observation underscores the significance of meticulously designing the phase-shift response of RIS for enhancing beamforming gain with an increasing number of RIS elements.  Finally, in comparison to the \textbf{With RIS, 1 bit}  scheme, the latency in the \textbf{With RIS, 2 bit}  scheme is reduced and approaches the levels observed in the continuous phase shift scheme \textbf{With RIS}. This finding suggests that considering continuous phase design can serve as a tight bound for evaluating the practical discrete RIS phase shift configuration.
\begin{figure*}
    \begin{center}
        \begin{minipage}{0.3\textwidth}
            \subfloat[]{\includegraphics[width=2in]{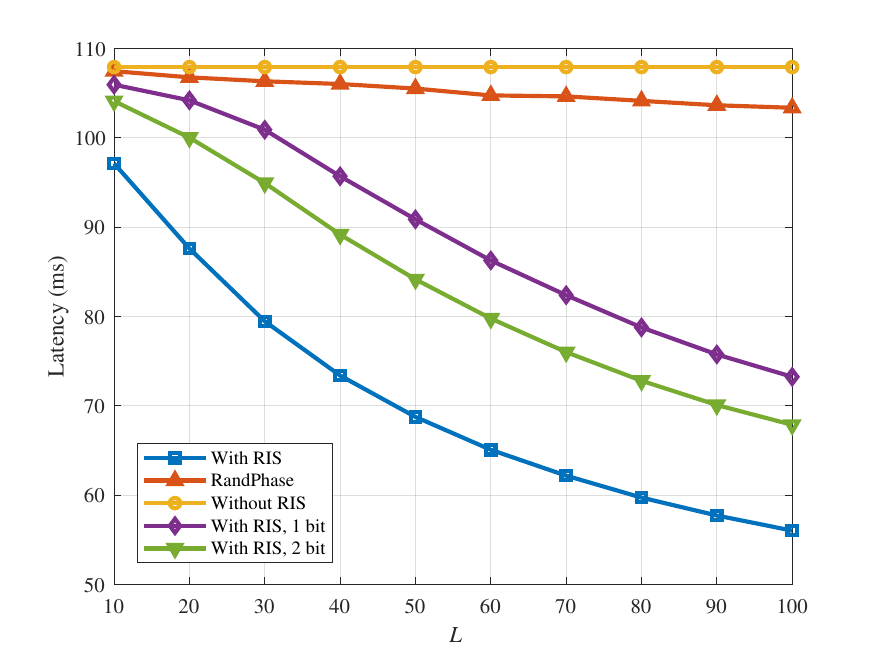} \label{fign1}}
            \\
            \subfloat[]{\includegraphics[width=2in]{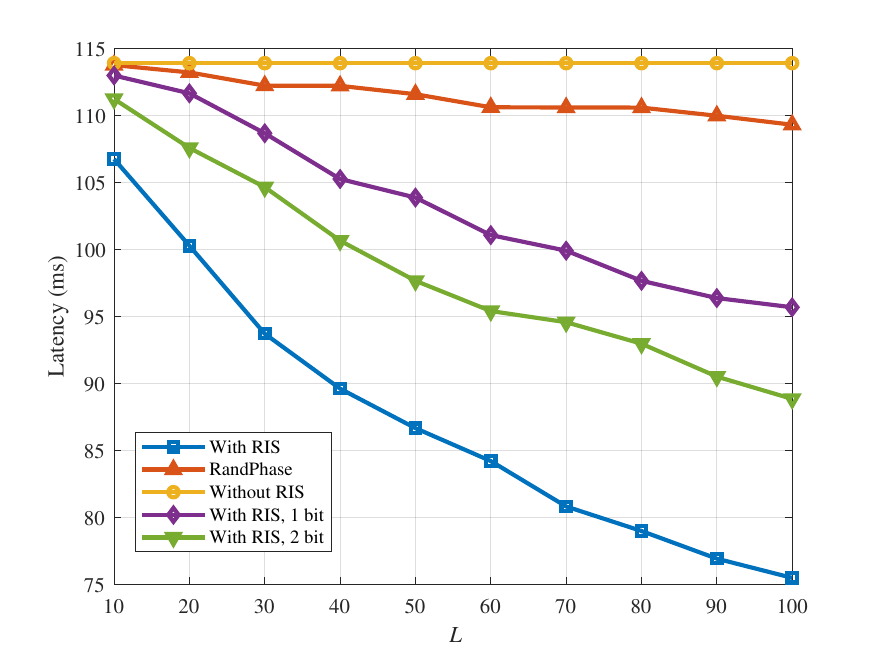} \label{fign2}}
            \caption{Latency versus the number of RIS elements for (a) the single-UE scenario using Algorithm \ref{alg:algorithm5} and (b) the multi-UE scenario using Algorithm \ref{alg:algorithm4}.}
            \label{fign}
        \end{minipage}
        \begin{minipage}{0.3\textwidth}
            \subfloat[]{\includegraphics[width=2in]{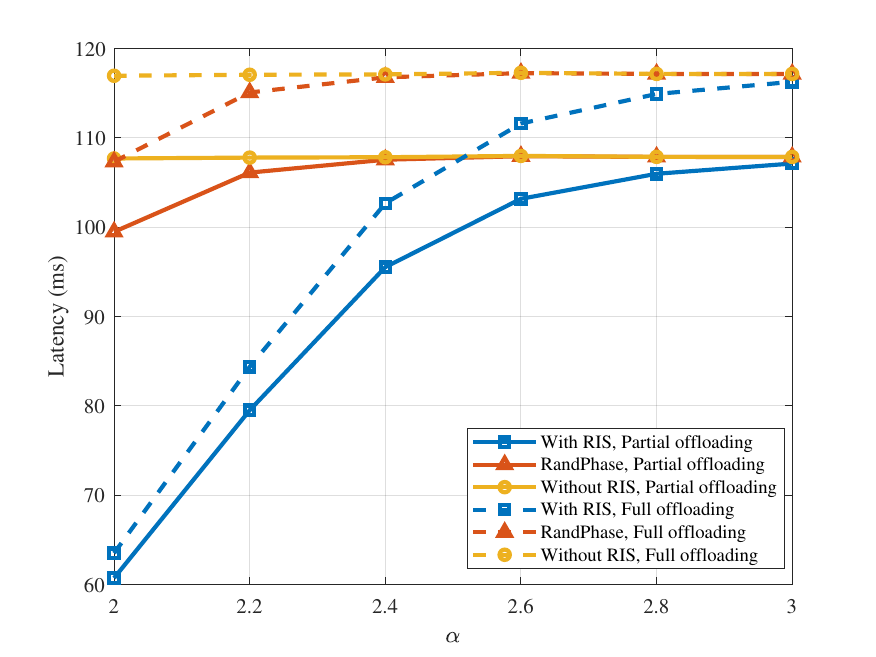} \label{figa1}}
            \\
            \subfloat[]{\includegraphics[width=2in]{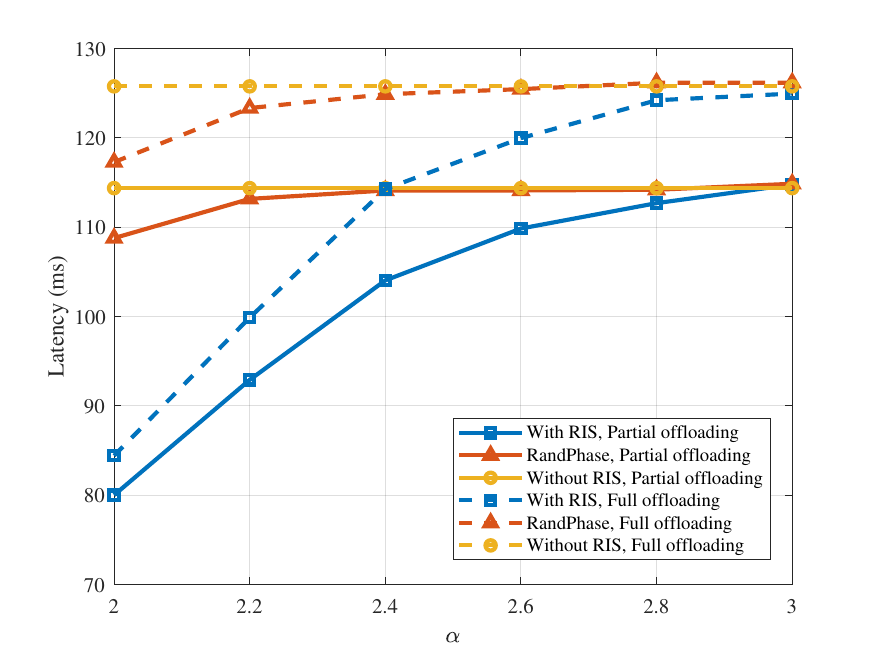} \label{figa2}}
            \caption{Latency versus the $\alpha _{RIS}$ path loss exponent for (a) the single-UE scenario using Algorithm \ref{alg:algorithm5} and (b) the multi-UE scenario using Algorithm \ref{alg:algorithm4}.}
            \label{figa}
        \end{minipage}
        \begin{minipage}{0.3\textwidth}
            \subfloat[]{\includegraphics[width=2in]{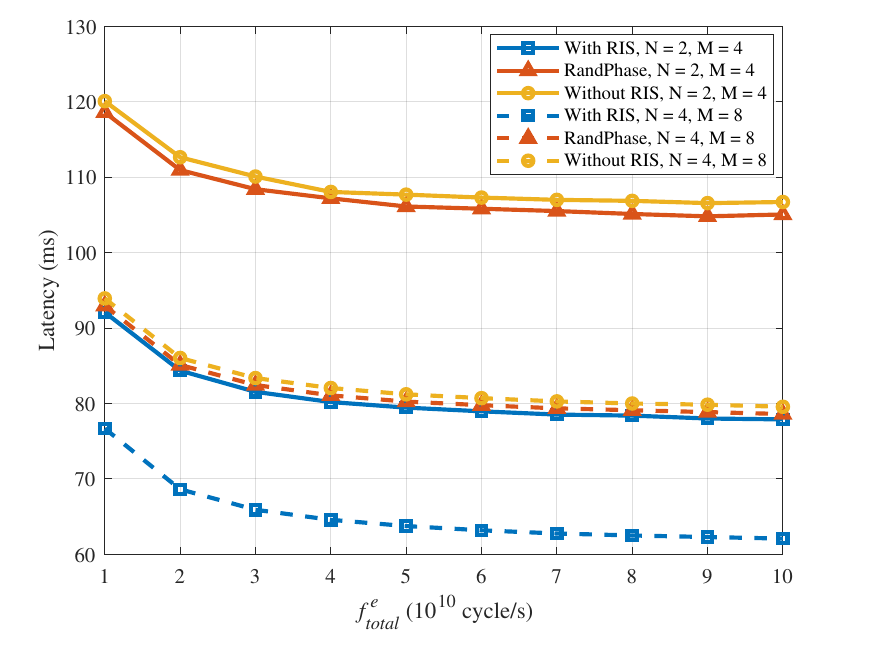} \label{figf1}}
            \\
            \subfloat[]{\includegraphics[width=2in]{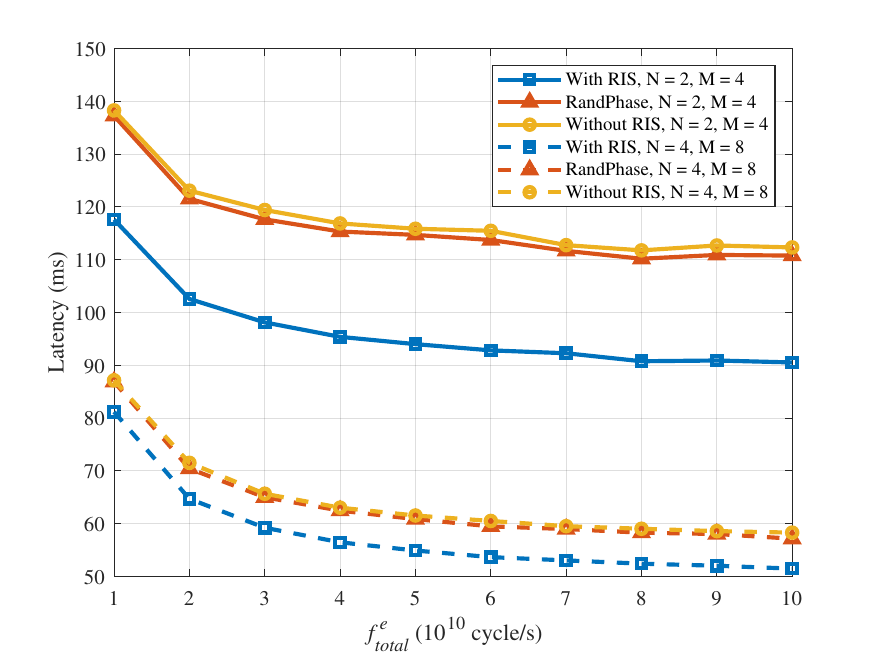} \label{figf2}}
            \caption{Latency versus the edge computing capability for (a) the single-UE scenario using Algorithm \ref{alg:algorithm5} and (b) the multi-UE scenario using Algorithm \ref{alg:algorithm4}.}
            \label{figf}
        \end{minipage}
    \end{center}
\end{figure*}

Fig. \ref{figa} plots the weighted latency versus the path loss exponent associated with RIS, specifically $\alpha_{\mathrm{ru}}$ and $\alpha_{\mathrm{r}}$, which are designated as  $\alpha_{RIS}$, under different offloading strategies. It can be seen from Fig. \ref{figa} that as expected, the latency escalates w.r.t. $\alpha_{RIS}$. In addition, the latency in the \textbf{RandPhase} scheme closely resembles that of the \textbf{Without RIS} scheme when the value of $\alpha_{RIS}$ is substantial. This is because the beamforming gain introduced by the RIS cannot compensate for an increasing channel path-loss $\alpha_{RIS}$ and the direct UE-BS links are dominated. This implies that in practical applications, strategic placement of the RIS is crucial to circumvent obstacles and achieve reduced values of $\alpha_{\mathrm{r}}$ and $\alpha_{\mathrm{ru}}$. Additionally, we examine the influence of offloading strategies. The latency achieved with the partial offloading strategy is smaller than that with the full offloading strategy, and the disparity between the two strategies is widened as $\alpha_{\mathrm{r}}$ increases.  The reason for this discrepancy can be explained as follows. In scenarios with favorable channel conditions, both strategies tend to offload the computational task to the edge server. However, as the RIS-related channel deteriorates, the partial offloading strategy becomes smarter in the processing part of the computational task locally.

\hyphenation{dece-lerates}
Fig. \ref{figf} shows the weighted latency versus the edge computing capability under different numbers of transmit and receive antennas. %The latency first significantly decreases as the processing rate of the ECS $f_{\mathrm{total}}^e$ rises, then the rate of latency reduction decelerates. 
The latency initially decreases significantly as the processing rate of the ECS $f_{\mathrm{total}}^e$ rises, followed by a deceleration in the rate of latency reduction.
This is because at lower $f_{\mathrm{total}}^e$ values, the latency is primarily influenced by edge computing capability. By contrast, at higher $f_{\mathrm{total}}^e$ levels, the computation offloading delay is dominant in the overall latency. As a result, minimizing latency does not necessarily require equipping edge computing nodes with excessively powerful computational capabilities. Furthermore, a noteworthy decrease in latency occurs with an increase in the numbers of transmit and receive antennas, especially in the multi-UE scenario.  

Fig. \ref{figeta} portrays the weighted latency versus the SNR(SINR) threshold $\eta$. In the single-UE scenario, latency remains relatively constant despite variations in the threshold, whereas in multi-UE scenarios, latency increases with $\eta$. As demonstrated in Appendix \ref{app2}, the design of the optimal waveform essentially involves power allocation between the radar target and the BS. When the radar SNR requirement is low, there is no necessity to allocate power specifically for the radar target, as the signal transmitted to the BS inherently provides sufficient power to adequately meet the SNR requirement for the radar target. In a similar vein, within multi-UE scenarios, interference generated by other UEs necessitates the allocation of additional power to radar sensing in order to maintain the efficiency constraint ($\eta$), consequently resulting in an increase in latency.

\begin{figure}[h]
    \centering
    \includegraphics[width=2.3in]{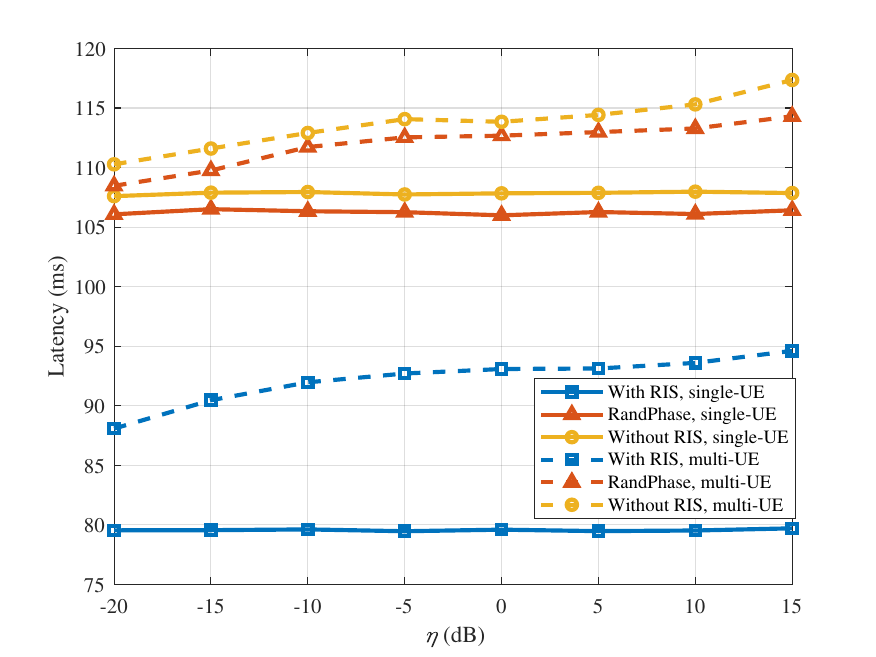}
    \caption{Latency versus the SNR threshold for the single-UE scenario using Algorithm \ref{alg:algorithm5} and the SINR threshold for the multi-UE scenario using Algorithm \ref{alg:algorithm4}.}
    \label{figeta}
\end{figure}
\section{Conclusion}
\hyphenation{signifi-cant}
In this paper, we investigated a framework for RIS-assisted ICSC systems with a User-centric focus, aimed at enhancing the performance of DFRC-enabled UEs. A latency minimization problem was formulated by jointly optimizing the precoding and decoding matrices of the uplink signal, the receive radar beamforming vector, the RIS phase shift vector, the offloading volume, and the edge computational resource allocation in both multi-UE scenario and single-UE scenario. In the multi-UE scenario, we proposed an algorithm based on the BCD method to address the non-convexity of the problem by optimizing the computational and beamforming settings alternately. Specifically, for the computational settings, a closed-form solution for the offloading volume was derived, and a low-complexity algorithm based on the bisection search method was applied to enhance the optimization of edge computing resource allocation. Then, two equivalent transformations were applied to the active and passive beamforming matrices to convert the OF, originally expressed as a non-convex sum-of-ratios, into a weighted sum-rate form, which was tackled by using a pair of efficient algorithms. Moreover, a low complexity algorithm with closed-form solutions was provided in the simplified single-UE scenario. Simulation results indicated a significant performance enhancement in latency with the proposed algorithm compared to the conventional approach operating without the integration of an RIS. Furthermore, numerical analysis confirmed the efficiency of our proposed algorithm. 
\numberwithin{equation}{section}
%\vspace{-0.3cm}
\begin{appendices}
\section{The Proof of Proposition \ref{pro1}}
%\section
\label{app1}
The partial Lagrange function of Problem $\mathcal{P}2\text{-}1$ is given by
\begin{equation}
    \mathcal{L} (\bm{\mathrm{F}}_{\mathrm{c}},\bm{\mathrm{W}}_{\mathrm{c}},\bm{\mathrm{\theta}},\bm{\mathrm{\lambda }}) = \sum _{k=1}^K \lambda _k + \sum _{k=1}^K \delta _k [\xi _k v _k - \lambda _k R_k(\bm{\mathrm{F}}_{\mathrm{c}},\bm{\mathrm{W}}_{\mathrm{c}},\bm{\mathrm{\theta}})],
\end{equation}
where $\{ \delta _k\}  $ is the Lagrange multiplier. Given that $\{\bm{\mathrm{F}}_{\mathrm{c}}^{\star},\bm{\mathrm{W}}_{\mathrm{c}}^{\star},\bm{\mathrm{w}}_{\mathrm{s}}^{\star },\bm{\theta }^{\star},\bm{\mathrm{\lambda }}^{\star}\}$ represents the solution to Problem $\mathcal{P}2\text{-}1$, there exists $\bm{\mathrm{\delta}}^{\star}$ that fulfills the following KKT conditions:
\begin{align}
    & \frac{\partial \mathcal{L}}{\partial \theta _k} = -\delta _k^{\star} \lambda _k^{\star}\bigtriangledown R_k(\bm{\mathrm{F}}_{\mathrm{c}}^{\star},\bm{\mathrm{W}}_{\mathrm{c}}^{\star},\bm{\mathrm{\theta}}^{\star}) = 0,  \forall k \in \mathcal{K},\label{equapp3}\\ 
    & \frac{\partial \mathcal{L}}{\partial \mathrm{F}_{c,k}} =  -\delta _k^{\star} \lambda _k^{\star}\bigtriangledown R_k(\bm{\mathrm{F}}_{\mathrm{c}}^{\star},\bm{\mathrm{W}}_{\mathrm{c}}^{\star},\bm{\mathrm{\theta}}^{\star}) = 0, \forall k \in \mathcal{K},\label{equapp4}\\ 
    & \frac{\partial \mathcal{L}}{\partial \mathrm{W}_{c,k}} =  -\delta _k^{\star} \lambda _k^{\star}\bigtriangledown R_k(\bm{\mathrm{F}}_{\mathrm{c}}^{\star},\bm{\mathrm{W}}_{\mathrm{c}}^{\star},\bm{\mathrm{\theta}}^{\star}) = 0, \forall k \in \mathcal{K},\label{equapp5}\\ 
    & \frac{\partial \mathcal{L}}{\partial \mathrm{\lambda }_{ k}} = 1-\delta _k^{\star} R_k(\bm{\mathrm{F}}_{\mathrm{c}}^{\star},\bm{\mathrm{W}}_{\mathrm{c}}^{\star},\bm{\mathrm{\theta}}^{\star}) = 0, \forall k \in \mathcal{K},\label{equapp1}\\ 
    & \delta _k^{\star} [\xi _k v _k - \lambda _k^{\star} R_k(\bm{\mathrm{F}}_{\mathrm{c}}^{\star},\bm{\mathrm{W}}_{\mathrm{c}}^{\star},\bm{\mathrm{\theta}}^{\star})] = 0, \forall k \in \mathcal{K},\label{equapp2}\\ 
    & \delta _k^{\star} \geqslant 0, \forall k \in \mathcal{K},\label{equapp6}\\ 
    & \xi _k v _k - \lambda _k^{\star} R_k(\bm{\mathrm{F}}_{\mathrm{c}}^{\star},\bm{\mathrm{W}}_{\mathrm{c}}^{\star},\bm{\mathrm{\theta}}^{\star}) \leqslant 0, \forall k \in \mathcal{K},\label{equapp7}\\ 
    & 0 \leqslant \theta _k^{\star} \leqslant 2 \pi, \forall k \in \mathcal{K}. \label{equapp8}
\end{align}
Since $R_k(\bm{\mathrm{F}}_{\mathrm{c}},\bm{\mathrm{\theta}}) > 0$, (\ref{equapp1}) is equivalent to 
\begin{equation}
    \delta _k^{\star} = \frac{1}{R_k(\bm{\mathrm{F}}_{\mathrm{c}}^{\star},\bm{\mathrm{W}}_{\mathrm{c}}^{\star},\bm{\mathrm{\theta}}^{\star}) }, \forall k \in \mathcal{K}.
\end{equation}
From (\ref{equapp2}), we can derive that
\begin{equation}
    \lambda _k ^{\star} = \frac{\xi _k v _k}{R_k(\bm{\mathrm{F}}_{\mathrm{c}}^{\star},\bm{\mathrm{W}}_{\mathrm{c}}^{\star},\bm{\mathrm{\theta}}^{\star})},  \forall k \in \mathcal{K}.
\end{equation}
When we set $\bm{\mathrm{\lambda }} = \bm{\mathrm{\lambda }}^{\star}$ and $\bm{\mathrm{\delta}} = \bm{\mathrm{\delta}}^{\star}$, the KKT conditions of Problem $\mathcal{P}2\text{-}2$ are the same as (\ref{equapp3}), (\ref{equapp4}), (\ref{equapp5}), and (\ref{equapp8}). Thus, Proposition \ref{pro1} is proved.
\section{The Proof of Proposition \ref{pro2}}
\label{app2}
First, we demonstrate that when the OF of Problem $\mathcal{P}3\text{-}3$ is maximized, the constraint in (\ref{transmit}) achieves the upper bound $\bm{\mathrm{f}}_{\mathrm{c}} \bm{\mathrm{f}}_{\mathrm{c}}^{\mathrm{H}} = P_{t}$ by using the proof by contradiction. Assuming that $\bm{\mathrm{f}}_{\mathrm{c}}^{\star}$ represents the optimal solution of Problem $\mathcal{P}3\text{-}3$ such that $\bm{\mathrm{f}}_{\mathrm{c}}^{\star} {\bm{\mathrm{f}}_{\mathrm{c}}^{\star \mathrm{H}}} < P_t$, we can derive $\hat{\bm{\mathrm{f}}_{\mathrm{c}}} = (1+\beta)\bm{\mathrm{f}}_{\mathrm{c}}^{\star}$ satisfying $\hat{\bm{\mathrm{f}}_{\mathrm{c}}} \hat{\bm{\mathrm{f}}_{\mathrm{c}}}^{\mathrm{H}} = P_{t}$, where $\beta \geqslant 0$, as a feasible solution of Problem $\mathcal{P}3\text{-}3$. %As $\hat{\bm{\mathrm{f}}_{\mathrm{c}}}\hat{\bm{\mathrm{f}}_{\mathrm{c}}}^{\mathrm{H}} \geqslant \bm{\mathrm{f}}_{\mathrm{c}}^{\star} (\bm{\mathrm{f}}_{\mathrm{c}}^{\star})^{\mathrm{H}}$, it follows that $\bm{\mathrm{h}}^{\mathrm{H}} \hat{\bm{\mathrm{f}}_{\mathrm{c}}}\hat{\bm{\mathrm{f}}_{\mathrm{c}}}^{\mathrm{H}}  \bm{\mathrm{h}} \geqslant \bm{\mathrm{h}}^{\mathrm{H}} \bm{\mathrm{f}}_{\mathrm{c}}^{\star} (\bm{\mathrm{f}}_{\mathrm{c}}^{\star})^{\mathrm{H}}\bm{\mathrm{h}}$ and $\bm{\mathrm{g}}^{\mathrm{H}} \hat{\bm{\mathrm{f}}_{\mathrm{c}}}\hat{\bm{\mathrm{f}}_{\mathrm{c}}}^{\mathrm{H}}  \bm{\mathrm{g}} \geqslant \bm{\mathrm{g}}^{\mathrm{H}} \bm{\mathrm{f}}_{\mathrm{c}}^{\star} (\bm{\mathrm{f}}_{\mathrm{c}}^{\star})^{\mathrm{H}}\bm{\mathrm{g}}$. 
Since the OF of Problem $\mathcal{P}3\text{-}3$ when $\bm{\mathrm{f}}_{\mathrm{c}} = \hat{\bm{\mathrm{f}}_{\mathrm{c}}}$ surpasses that when $\bm{\mathrm{f}}_{\mathrm{c}} =\bm{\mathrm{f}}_{\mathrm{c}}^{\star}$, the OF of Problem $\mathcal{P}3\text{-}3$ is maximized if and only if when $\bm{\mathrm{f}}_{\mathrm{c}} \bm{\mathrm{f}}_{\mathrm{c}}^{\mathrm{H}} = P_{t}$. Thus, Problem $\mathcal{P}3\text{-}3$ can be rewritten as
\begin{subequations}
\label{equfff}
    \begin{align}
    \max_{\bm{\mathrm{f}}_{\mathrm{c}}} \quad &
        \left\lvert \bm{\mathrm{h}}^{\mathrm{H}} \bm{\mathrm{f}}_{c} \right\rvert ^2 \\
        s.t. \quad &
         \left\lvert \bm{\mathrm{f}}_{c} \right\rvert ^2 = P_{t},\\
        & \left\lvert \bm{\mathrm{g}}^{\mathrm{H}} \bm{\mathrm{f}}_{c} \right\rvert ^2 \geqslant  \eta. \label{equb2c}
    \end{align}
\end{subequations}
If $\eta > P_{t} \left\lVert \bm{\mathrm{g}} \right\rVert ^2$, by applying the Cauchy's inequality, we have $|\bm{\mathrm{g}}^{\mathrm{H}} \bm{\mathrm{f}}_{c}|^2 \leqslant  \left\lVert \bm{\mathrm{g}} \right\rVert ^2 \left\lVert \bm{\mathrm{f}}_{c} \right\rVert ^2 = P_{t} \left\lVert \bm{\mathrm{g}} \right\rVert ^2 < \eta$, which is in conflict with constraint (\ref{equb2c}). Thus, Problem (\ref{equfff}) is feasible only when $ \eta \leqslant P_{t} \left\lVert \bm{\mathrm{g}} \right\rVert ^2$. Case \ref{case3} in Proposition \ref{pro2} is proved.

As Problem (\ref{equfff}) is convex and satisfies Slater's condition \cite{44}, the KKT conditions are utilized to address this problem. The Lagrange function of Problem (\ref{equfff}) is expressed as
\begin{align}
    \mathcal{L}(\bm{\mathrm{f}}_{\mathrm{c}}) = -\bm{\mathrm{h}}^{\mathrm{H}} \bm{\mathrm{f}}_{\mathrm{c}} \bm{\mathrm{f}}_{\mathrm{c}}^{\mathrm{H}} \bm{\mathrm{h}} + \varphi 
    (\eta - \bm{\mathrm{g}}^{\mathrm{H}} \bm{\mathrm{f}}_{\mathrm{c}} \bm{\mathrm{f}}_{\mathrm{c}}^{\mathrm{H}} \bm{\mathrm{g}}) + \iota 
    (\bm{\mathrm{f}}_{\mathrm{c}} \bm{\mathrm{f}}_{\mathrm{c}}^{\mathrm{H}} - P_{t} ),
\end{align}
where $\varphi$ and $\iota $ are the Lagrange multipliers. 
Then, the KKT conditions for (\ref{equfff}) can be formulated as follows:
\begin{align}
& \frac{\partial \mathcal{L}}{\partial \bm{\mathrm{f}}_{\mathrm{c}}^{\star}} = - \bm{\mathrm{h}}^{\mathrm{H}} \bm{\mathrm{f}}_{\mathrm{c}} \bm{\mathrm{h}} - \varphi \bm{\mathrm{g}}^{\mathrm{H}} \bm{\mathrm{f}}_{\mathrm{c}}\bm{\mathrm{g}} + \iota \bm{\mathrm{f}}_{\mathrm{c}} = \bm{0}, \label{equ83} \\ 
& \eta - \bm{\mathrm{g}}^{\mathrm{H}} \bm{\mathrm{f}}_{\mathrm{c}} \bm{\mathrm{f}}_{\mathrm{c}}^{\mathrm{H}} \bm{\mathrm{g}} \leqslant 0,  \label{equ84} \\
& \bm{\mathrm{f}}_{\mathrm{c}} \bm{\mathrm{f}}_{\mathrm{c}}^{\mathrm{H}} - P_{t} = 0,  \label{equ85} \\
& \varphi \geqslant 0,  \label{equ86} \\
& \varphi 
    (\eta - \bm{\mathrm{g}}^{\mathrm{H}} \bm{\mathrm{f}}_{\mathrm{c}} \bm{\mathrm{f}}_{\mathrm{c}}^{\mathrm{H}} \bm{\mathrm{g}}) = 0 \label{equ87} .
\end{align}
  (\ref{equ83}) is equivalent to 
\begin{equation}
    \label{equ89}
    \bm{\mathrm{f}}_{\mathrm{c}} = \frac{\bm{\mathrm{h}}^{\mathrm{H}} \bm{\mathrm{f}}_{\mathrm{c}}}{\iota } \bm{\mathrm{h}} + \frac{\varphi \bm{\mathrm{g}}^{\mathrm{H}} \bm{\mathrm{f}}_{\mathrm{c}}}{\iota } \bm{\mathrm{g}}, 
\end{equation}
which indicates $\bm{\mathrm{f}}_{\mathrm{c}} \in \mathrm{span} (\bm{\mathrm{h}}, \bm{\mathrm{g}})$. For convenience, $\bm{\mathrm{f}}_{\mathrm{c}}$ can be rewritten as
\begin{equation}
    \label{equ90}
    \bm{\mathrm{f}}_{\mathrm{c}} = a \bm{\mathrm{h}} + b \bm{\mathrm{g}},
\end{equation}
where the coefficients $a$ and $b$ denote the amplitudes of the power budget allocated to communication and radar sensing, respectively. By substituting (\ref{equ90}) into (\ref{equ85}), we obtain that
\begin{align}
    |a|^2 \bm{\mathrm{h}}^{\mathrm{H}} \bm{\mathrm{h}} + a^{\star} b \bm{\mathrm{h}}^{\mathrm{H}} \bm{\mathrm{g}} + ab^{\star} \bm{\mathrm{g}}^{\mathrm{H}}\bm{\mathrm{h}} +  |b|^2 \bm{\mathrm{g}}^{\mathrm{H}} \bm{\mathrm{g}} = P_t. \label{equ93} 
\end{align}
Moreover, (\ref{equ87}) suggests that either $\varphi$ or $\eta - \bm{\mathrm{g}}^{\mathrm{H}} \bm{\mathrm{f}}_{\mathrm{c}} \bm{\mathrm{f}}_{\mathrm{c}}^{\mathrm{H}} \bm{\mathrm{g}}$ equals to 0. Next, we discuss the different cases separately.

\begin{enumerate}
    \item When $\varphi = 0$, (\ref{equ89}) is transformed into 
\end{enumerate}
%When $\varphi = 0$, (\ref{equ89}) is transformed into 
\begin{equation}
\label{equ91}
    \bm{\mathrm{f}}_{\mathrm{c}} = \frac{\bm{\mathrm{h}}^{\mathrm{H}} \bm{\mathrm{f}}_{\mathrm{c}}}{\iota } \bm{\mathrm{h}}. 
\end{equation}
By substituting (\ref{equ90}) into (\ref{equ91}), we derive that
\begin{align}
    a \bm{\mathrm{h}} + b \bm{\mathrm{g}} = 
    \frac{a \bm{\mathrm{h}}^{\mathrm{H}} \bm{\mathrm{h}} + b\bm{\mathrm{h}}^{\mathrm{H}} \bm{\mathrm{g}} }{\iota } \bm{\mathrm{h}}. \label{equ94}
\end{align}
From (\ref{equ94}), it can be deduced that $b = 0$. By substituting $b = 0$ into   (\ref{equ93}), the value of $a$ can be determined as
\begin{equation}
    |a| = \frac{\sqrt{P_t}}{||\bm{\mathrm{h}}||},
\end{equation}
where the phase of $a$ is arbitrary. Meanwhile, it can be derived from (\ref{equ84}) that 
\begin{equation}
    \eta \leqslant \frac{P_t |\bm{\mathrm{g}}^{\mathrm{H}} \bm{\mathrm{h}}|^2 }{||\bm{\mathrm{h}}|| ^2}.
\end{equation}
Thus, Case \ref{case1} in Proposition \ref{pro2} is proved.

\begin{enumerate}\setcounter{enumi}{1}
    \item When $\eta - \bm{\mathrm{g}}^{\mathrm{H}} \bm{\mathrm{f}}_{\mathrm{c}} \bm{\mathrm{f}}_{\mathrm{c}}^{\mathrm{H}} \bm{\mathrm{g}} = 0$, by substituting   (\ref{equ90}) into $\eta - $
\end{enumerate}
$\bm{\mathrm{g}}^{\mathrm{H}} \bm{\mathrm{f}}_{\mathrm{c}} \bm{\mathrm{f}}_{\mathrm{c}}^{\mathrm{H}} \bm{\mathrm{g}} = 0$ and (\ref{equ89}), we have
%When $\eta - \bm{\mathrm{g}}^{\mathrm{H}} \bm{\mathrm{f}}_{\mathrm{c}} \bm{\mathrm{f}}_{\mathrm{c}}^{\mathrm{H}} \bm{\mathrm{g}} = 0$, by substituting   (\ref{equ90}) into $\eta - \bm{\mathrm{g}}^{\mathrm{H}} \bm{\mathrm{f}}_{\mathrm{c}} \bm{\mathrm{f}}_{\mathrm{c}}^{\mathrm{H}} \bm{\mathrm{g}} = 0$ and (\ref{equ89}), we have
\begin{align}
        & |a|^2 \bm{\mathrm{g}}^{\mathrm{H}} \bm{\mathrm{h}} \bm{\mathrm{h}}^{\mathrm{H}} \bm{\mathrm{g}} + a^{\star}b \bm{\mathrm{g}}^{\mathrm{H}} \bm{\mathrm{g}} \bm{\mathrm{h}}^{\mathrm{H}} \bm{\mathrm{g}} + ab^{\star} \bm{\mathrm{g}}^{\mathrm{H}} \bm{\mathrm{h}} \bm{\mathrm{g}}^{\mathrm{H}} \bm{\mathrm{g}}  \nonumber \\ 
        &+|b|^2 \bm{\mathrm{g}}^{\mathrm{H}} \bm{\mathrm{g}} \bm{\mathrm{g}}^{\mathrm{H}} \bm{\mathrm{g}} = \eta, \label{equ97}  \\ 
        &a \bm{\mathrm{h}} + b \bm{\mathrm{g}} = \frac{a \bm{\mathrm{h}}^{\mathrm{H}} \bm{\mathrm{h}} + b\bm{\mathrm{h}}^{\mathrm{H}} \bm{\mathrm{g}} }{\iota } \bm{\mathrm{h}} + \frac{\varphi (a \bm{\mathrm{g}}^{\mathrm{H}} \bm{\mathrm{h}} + b\bm{\mathrm{g}}^{\mathrm{H}} \bm{\mathrm{g}})}{\iota} \bm{\mathrm{g}} \label{equ98}.
\end{align}
From (\ref{equ93}) and (\ref{equ97}), it can be derived that
\begin{equation}
\label{a}
    |a| = \sqrt{\frac{P_t ||\bm{\mathrm{g}}||^2 - \eta}{||\bm{\mathrm{h}}||^2 ||\bm{\mathrm{g}}||^2 - |\bm{\mathrm{h}}^{\mathrm{H}} \bm{\mathrm{g}}|^2 }}.
\end{equation}
From (\ref{equ98}), we obtain that
\begin{align}
    &\angle(a) = \angle(a \bm{\mathrm{h}}^{\mathrm{H}} \bm{\mathrm{h}} + b\bm{\mathrm{h}}^{\mathrm{H}} \bm{\mathrm{g}}) = \angle(b\bm{\mathrm{h}}^{\mathrm{H}} \bm{\mathrm{g}}) , \label{equb19} \\
    &\angle(b) = \angle(a \bm{\mathrm{g}}^{\mathrm{H}} \bm{\mathrm{h}} + b\bm{\mathrm{g}}^{\mathrm{H}} \bm{\mathrm{g}}) = \angle(a \bm{\mathrm{g}}^{\mathrm{H}} \bm{\mathrm{h}}).\label{equb20}
\end{align}
It follows
\begin{equation}
\label{ab}
    \angle(a) - \angle(b) = \angle(\bm{\mathrm{h}}^{\mathrm{H}} \bm{\mathrm{g}}).
\end{equation}
Meanwhile, by substituting   (\ref{a}) and (\ref{ab}) into   (\ref{equ93}), we derive that 
\begin{equation}
    |b| = \frac{\sqrt{\eta}}{||\bm{\mathrm{g}}||^2} - 
    \frac{|\bm{\mathrm{h}}^{\mathrm{H}} \bm{\mathrm{g}} |}{||\bm{\mathrm{g}}||^2} |a|.
\end{equation}
Furthermore, since $|b| \geqslant 0$, the parameter $\eta$ satisfies $\eta \geqslant  \frac{P_t |\bm{\mathrm{g}}^{\mathrm{H}} \bm{\mathrm{h}}|^2 }{|\bm{\mathrm{h}}| ^2}$. Thus, Case \ref{case2} in Proposition \ref{pro2} is proved.

%With the discussions above, Proposition \ref{pro2} is proved.
\end{appendices}
\bibliographystyle{IEEEtran} 
\bibliography{IEEEabrv,ICSC.bib}
\end{document}